\renewcommand{\thesubfigure}{\thefigure.\arabic{subfigure}}
\renewcommand{\p@subfigure}{}
\renewcommand{\@thesubfigure}{\thesubfigure:\hskip\subfiglabelskip}
\newcommand{\cl}{\mbox{cl}}
\newcommand{\Int}{\mbox{int}}
\newcommand{\bdy}{\mbox{bdy}}
\newcommand{\fil}{\mbox{fil}}
\newcommand{\Nrv}{\mbox{Nrv}}
\newcommand{\near}{\delta} 
\newcommand{\dnear}{\delta_{\Phi}} 
\newcommand{\dcap}{\mathop{\cap}\limits_{\Phi}} 
\newcommand{\assign}{\mathrel{\mathop :}=}
\newcommand{\sk}{\mbox{sk}}
\newcommand{\maxNrvClu}{\mbox{maxNrvClu}}
\newcommand{\maxNerv}{\mbox{maxNrv}}
\newcommand{\clNrv}{\mbox{clNrv}}
\newcommand{\clNrvt}{\mbox{clNrv}_{\Delta}}
\newcommand{\dclNrvt}{\mbox{clNrv}_{\mathop{\Delta}\limits_{\Phi}}}
\newcommand{\dfar}{{\not\delta}_{\Phi}} 
\newcommand{\sn}{\mathop{\delta}\limits^{\doublewedge}} 
\newcommand{\snd}{\mathop{\delta_{_{\Phi}}}\limits^{\doublewedge}} 
\newcommand{\sdfar}{\stackrel{\not{\text{\normalsize$\delta$}}}{\text{\tiny$\doublevee$}}_{_{\Phi}}} %
\newcommand{\sfar}{\stackrel{\not{\text{\normalsize$\delta$}}}{\text{\tiny$\doublevee$}}} 
\newcommand{\notfar}{\mathop{\not{\delta}}\limits^{\doublewedge}} 
\newtheorem{example}{Example}
\newtheorem{remark}{Remark}
\newtheorem{definition}{Definition}
\newtheorem{lemma}{Lemma}
\newtheorem{theorem}{Theorem}
\newtheorem{proposition}{Proposition}
\newtheorem{corollary}{Corollary}
\definecolor{light}{gray}{0.80}
\begin{document}

\title[Proximal Nerve Complexes]{Proximal Nerve Complexes.\\ A Computational Topology Approach}

\author[J.F. Peters]{J.F. Peters}
\email{James.Peters3@umanitoba.ca}
\address{\llap{$^{\alpha}$\,}Computational Intelligence Laboratory,
University of Manitoba, WPG, MB, R3T 5V6, Canada and
Department of Mathematics, Faculty of Arts and Sciences, Ad\.{i}yaman University, 02040 Ad\.{i}yaman, Turkey}
\thanks{The research has been supported by the Natural Sciences \&
Engineering Research Council of Canada (NSERC) discovery grant 185986 
and Instituto Nazionale di Alta Matematica (INdAM) Francesco Severi, Gruppo Nazionale per le Strutture Algebriche, Geometriche e Loro Applicazioni grant 9 920160 000362, n.prot U 2016/000036.}

\subjclass[2010]{Primary 54E05 (Proximity); Secondary 68U05 (Computational Geometry)}

\date{}

\dedicatory{Dedicated to K. Borsuk and Som Naimpally}

\begin{abstract}
This article introduces a theory of proximal nerve complexes and nerve spokes, restricted to the triangulation of finite regions in the Euclidean plane.  A nerve complex is a collection of filled triangles with a common vertex, covering a finite region of the plane.  Structures called $k$-spokes, $k\geq 1$, are a natural extension of nerve complexes.  A $k$-spoke is the union of a collection of filled triangles that pairwise either have a common edge or a common vertex.  A consideration of the closeness of nerve complexes leads to a proximal view of simplicial complexes.   A practical application of proximal nerve complexes is given, briefly, in terms of object shape geometry in digital images.
\end{abstract}
\keywords{Nerve Complex, Nerve Spoke, Proximity, Shape Geometry, Simplicial Complexes, Triangulation}

\maketitle

\section{Introduction}
This article introduces a proximal computational topology approach in the theory of nerve complexes.  
Computational topology combines geometry, topology and algorithms in the study of topological structures, introduced by H. Edelsbrunner and J.L. Harer~\cite{Edelsbrunner2010compTop}.  K. Borsuk was one of the first to suggest studying sequences of plane shapes in his theory of shapes~\cite{Borsuk1970theoryOfShape}.  Borsuk also observed that every polytope can be decomposed $X$ into a finite sum of elementary simplexes, which he called brics.  A \emph{\bf polytope} is the intersection of finitely many closed half spaces~\cite{Ziegle2007polytopes}. This leads to a simplicial complex $K$ covered by simplexes $\Delta_1,\dots,\Delta_n$ (filled triangles) such that the nerve of the decomposition is the same as $K$~\cite{Borsuk1948FMsimplexes}.  Briefly, a \emph{\bf geometric simplicial complex} (denoted by $\Delta(S)$ or simply by $\Delta$) is the convex hull of a set of points $S$, {\em i.e.}, the smallest convex set containing $S$.  Geometric simplexes in this paper are restricted to vertices (0-simplexes), line segments (1-simplexes) and filled triangles (2-simplexes) in the Euclidean plane, since our main interest is in the extraction of features of simplexes superimposed on planar digital images.  In this paper, we consider only what is known as a Vietoris-Rips complex, which is a collection of 2-simplices determined by subsets of 3 points in a set of points in the Euclidean plane~\cite{ChambersSilvaEricksonGhrist2010DCGVietorisRipsComplexes}. 
An important form of simplicial complex is a collection of simplexes called a nerve.

\setlength{\intextsep}{0pt}
\begin{wrapfigure}[13]{R}{0.50\textwidth}
\begin{minipage}{3.8 cm}
\centering
\begin{pspicture}
(-1.0,-0.5)(5.0,4.2)
\psframe[linecolor=black](-0.5,-0.3)(4.3,4.0)
\providecommand{\PstPolygonNode}{%
 \psdots[dotstyle=o,dotsize=0.08,linecolor=blue,fillcolor=yellow](1;\INode)
 \psline(0.95;\INode)}
\PstPolygon[unit=1.75,PolyNbSides=8,fillstyle=solid,fillcolor=lightgray]
\psdot[dotstyle=o,dotsize=0.15,linecolor=blue,fillcolor=yellow](-1.75,1.75)
\rput(-3.8,3.7){$\boldsymbol{X}$}
\rput(-1.0,3.5){$\boldsymbol{\Nrv K}$}
\rput(-0.8,1.5){$\boldsymbol{\sk A}$}
\end{pspicture}
\caption[]{ $\boldsymbol{\Nrv K}$}
\label{fig:spokes}
\end{minipage}
\end{wrapfigure}  

A planar simplicial complex $K$ is a \emph{\bf nerve}, provided the simplexes in $K$ have nonempty intersection (called the nucleus of the nerve).   A nerve of a simplicial complex $K$ (denoted by $\Nrv K$) in the triangulation of a plane region is defined by
$
\Nrv K = \left\{\Delta\subseteq K: \bigcap \Delta\neq \emptyset\right\}\ \mbox{(Nerve complex)}.
$
In other words, the simplexes in a nerve have proximity to each other, since they share the nucleus.  The \emph{\bf nucleus} of a nerve complex is a vertex common to the 2-simplexes in a nerve.  Triangulation of point clouds in the plane provides a straightforward basis for the study of nerve complexes.  A \emph{\bf spoke} $A$ (denoted by $\sk A$) on a nerve complex is a 2-simplex in the nerve.  

\begin{example}
Let $X$ be a planar triangulated region containing a nerve complex $\Nrv K$.   Each filled triangle in $\Nrv K$ is a spoke. For example, $sk A$ in Fig.~\ref{fig:spokes} is a spoke in $\Nrv K$.
\qquad \textcolor{blue}{\Squaresteel}
\end{example}

The study of nerves was introduced by P. Alexandroff~\cite{Alexandroff1928dimensionsbegriff}, elaborated by K. Borsuk~\cite{Borsuk1948FMsimplexes}, J. Leray~\cite{Leray1946homology}, and a number of others such as M. Adamaszek et al.~\cite{Adamaszek2014arXivNerveComplexes}, E.C. de Verdi\`{e}re et al.~\cite{Colin2012multinerves}, H. Edelsbrunner and J.L. Harer~\cite{Edelsbrunner2010compTop}, and more recently by M. Adamaszek, H. Adams, F. Frick, C. Peterson and C. Previte-Johnson~\cite{Adamaszek2016DCGnerveComplexesOfCircularArcs}.  In this paper, an extension of the Borsuk Nerve Theorem is given.

\begin{theorem}\label{thm:nerveTheorem}{\bf\rm Borsuk Nerve Theorem~\cite{Borsuk1948FMsimplexes}}
If $U$ is a collection of subsets in a topological space, the nerve complex is homotopy equivalent to the union of the subsets.
\end{theorem}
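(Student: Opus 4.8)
The plan is to prove the statement under the standard hypothesis that actually makes it true, namely that $U = \{U_i\}_{i\in I}$ is a \emph{good cover}: each $U_i$ and each nonempty finite intersection $U_{i_0}\cap\cdots\cap U_{i_k}$ is contractible, and (in the topological setting) the cover is open or at least admits a subordinate partition of unity. Write $X = \bigcup_i U_i$ for the union and $\Nrv U$ for the nerve, and let $|\Nrv U|$ denote its geometric realization. The goal is to exhibit a homotopy equivalence $X \simeq |\Nrv U|$. I would first remark that for nerve complexes as defined in this paper these hypotheses are automatic, since the cover is finite, the sets $U_i$ are filled triangles sharing the nucleus, and every nonempty intersection is a convex subpolytope, hence contractible.

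First I would reduce to the open case by replacing a good cover with an open thickening having the same nerve, so that a continuous partition of unity $\{\varphi_i\}$ subordinate to $\{U_i\}$ exists. Then I would define the canonical map $f : X \to |\Nrv U|$ by $f(x) = \sum_i \varphi_i(x)\, e_i$, where $e_i$ is the vertex of $|\Nrv U|$ corresponding to $U_i$. This is well defined because the set of indices $i$ with $\varphi_i(x)\neq 0$ spans a simplex of $\Nrv U$: the corresponding $U_i$ all contain $x$, so their intersection is nonempty, so that collection is a face of the nerve.

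Next I would show $f$ is a homotopy equivalence, for which the cleanest route is the Mayer--Vietoris blow-up (homotopy colimit) $\mathcal{B}$ of the diagram of intersections indexed by the face poset of $\Nrv U$, equipped with its natural projections $p_X : \mathcal{B}\to X$ and $p_N : \mathcal{B}\to |\Nrv U|$. The map $p_X$ is a homotopy equivalence because the fiber over a point $x\in X$ is the realization of the nerve of the subcover of sets containing $x$, which is a cone (hence contractible) on the vertex of any one such set; the map $p_N$ is a homotopy equivalence because the good-cover hypothesis forces each preimage, being an intersection $U_{i_0}\cap\cdots\cap U_{i_k}$, to be contractible, so $\mathcal{B}$ deformation retracts fiberwise onto $|\Nrv U|$. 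Combining these identifies $f$ (up to homotopy) with $p_N \circ p_X^{-1}$. Alternatively one argues by induction on $|I|$: split $X = (U_1\cup\cdots\cup U_{n-1})\cup U_n$, apply the gluing lemma for homotopy pushouts, and invoke the inductive hypothesis on $U_1\cup\cdots\cup U_{n-1}$ and on $U_n\cap(U_1\cup\cdots\cup U_{n-1})$ (covered by the $U_n\cap U_i$), matching the resulting pushout to the corresponding subcomplex decomposition of $|\Nrv U|$.

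The main obstacle I anticipate is the bookkeeping that identifies the blow-up (or the inductive pushout) with the geometric realization of the nerve exactly, together with the continuity of the fiberwise contractions --- i.e.\ checking that the local contractibility of the intersections assembles into a genuine global deformation retraction of $\mathcal{B}$. In the restricted planar, finite, triangulated setting of this paper this difficulty evaporates: the union of a nerve complex is star-shaped about its nucleus, so $f$ and a homotopy inverse can both be written down explicitly, with the straight-line homotopy toward the nucleus supplying the required contractions simultaneously on $X$ and on every filled sub-triangle. I would therefore present the general blow-up argument and then close with this explicit description for nerve complexes, noting that the same idea extends verbatim to the $k$-spoke structures introduced below.
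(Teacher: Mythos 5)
The paper never proves this statement: it is quoted as Borsuk's theorem and the version actually used later is Theorem~\ref{EHnerve} (the Edelsbrunner--Harer nerve theorem for finite collections of closed convex sets), which is likewise cited rather than proved, so there is no internal argument to compare yours against. Judged on its own terms, your proposal is sound, and it makes the one correction that matters: as literally stated --- ``$U$ a collection of subsets of a topological space'' --- the claim is false (two closed arcs covering a circle and meeting in two components have a nerve that is a single edge, hence contractible, while the circle is not), so the good-cover hypothesis you impose, or the closed-convex hypothesis of Theorem~\ref{EHnerve}, is genuinely needed; you rightly note it holds automatically for the nerve complexes of this paper, where every nonempty intersection of filled triangles through the nucleus is convex. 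Your route --- the partition-of-unity map to the geometric realization, identified up to homotopy with $p_N\circ p_X^{-1}$ through the Mayer--Vietoris blow-up whose two projections are equivalences by fiberwise contractibility, with the gluing-lemma induction as an alternative --- is the standard modern proof, and the only delicate points are exactly the ones you flag: passing from contractible fibers to a global equivalence requires the cover to be open (or the pieces to be CW/ANR), which your thickening step supplies. Your closing remark that in the planar triangulated setting the union of a nerve is star-shaped about its nucleus, so the straight-line homotopy toward the nucleus does all the work explicitly, is precisely the special case the paper leans on in Lemma~\ref{thm:1spokeHomotopy}, and stating it would make the paper's reliance on the cited theorem self-contained.
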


A main result in this paper is the following extension of Theorem~\ref{thm:nerveTheorem}.

\begin{theorem}\label{thm:nerveSpokeTheorem}
If $\Nrv K$ is a nerve complex in a topological space, $\Nrv K$ is homotopy equivalent to the union of its $n$-spokes, $n \geq 1$.
\end{theorem}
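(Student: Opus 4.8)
The plan is to exhibit the family of $n$-spokes of $\Nrv K$ as a good cover of the underlying planar polytope and then to apply the Nerve Theorem (Theorem~\ref{thm:nerveTheorem}). Throughout we use that, by hypothesis, $\Nrv K$ lies in the triangulation of a finite region of the Euclidean plane, so the cover is finite and straight-line homotopies are available.

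First I would fix the nucleus $p$ of $\Nrv K$ and set $\mathcal{S} = \{A_1,\dots,A_r\}$ to be the collection of all $n$-spokes of $\Nrv K$, $n\geq 1$. Since a $1$-spoke $\sk A$ is a single filled triangle of $\Nrv K$ and $\Nrv K$ is, by definition, the union of its filled triangles, the $1$-spokes alone already cover $\Nrv K$; hence $\bigcup \mathcal{S}$ equals the union of the $n$-spokes and also equals $\Nrv K$ as a subspace. Next I would verify that $\mathcal{S}$ is a good cover. Each $A_i$ is a union of filled triangles each of which contains $p$; since a filled triangle is convex, every point of $A_i$ is joined to $p$ by a segment contained in $A_i$, so $A_i$ is star-shaped about $p$. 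An intersection $A_{i_1}\cap\cdots\cap A_{i_m}$ then contains $p$, hence is nonempty, and is again star-shaped about $p$ (a segment from $p$ lying in each $A_{i_j}$ lies in the intersection); the homotopy $(x,t)\mapsto (1-t)x+tp$ contracts it to $p$. Thus every nonempty finite intersection of members of $\mathcal{S}$ is contractible.

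With the good-cover property in hand, the Nerve Theorem gives $\bigcup\mathcal{S}\simeq \Nrv(\mathcal{S})$. Because every spoke contains $p$, every sub-collection of $\mathcal{S}$ has nonempty intersection, so $\Nrv(\mathcal{S})$ is the full simplex on $\{1,\dots,r\}$ and is therefore contractible. On the other hand $\Nrv K$ is itself star-shaped about $p$, hence contractible, and $\bigcup\mathcal{S}=\Nrv K$; so both $\Nrv K$ and the union of its $n$-spokes are contractible and in particular homotopy equivalent, which is the assertion.

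The main obstacle is combinatorial rather than homotopical: one must pin down that every $n$-spoke of $\Nrv K$ is assembled only from $2$-simplexes of $\Nrv K$ (so that the nucleus lies in each spoke and in each intersection), which is what makes the ``common edge or common vertex'' condition in the definition compatible with all triangles sharing $p$. If the definition is instead read so that a spoke may extend beyond $\Nrv K$ across a shared edge, one extra step is required: show that such an enlarged union deformation retracts onto $\Nrv K$ by pushing each adjoined triangle toward the edge it shares with $\Nrv K$ and thence toward $p$; after this reduction the argument above applies unchanged.
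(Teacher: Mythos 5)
Your overall strategy---cover the space by the spokes, check that the pieces form a good cover, and invoke the Nerve Theorem---is the same one the paper uses: its Lemma~\ref{thm:1spokeHomotopy} applies the Edelsbrunner--Harer nerve theorem (Theorem~\ref{EHnerve}) to the $1$-spokes, which are closed convex sets all containing the nucleus. For that part you are in fact more careful than the paper: you verify star-shapedness about the nucleus $p$, check that all finite intersections are nonempty and contractible, identify the nerve of the cover as a full simplex, and conclude that both sides are contractible. As a proof of the $1$-spoke case your argument is complete.

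The gap is in the passage to $n$-spokes for $n\geq 2$, and it is precisely the case your main argument assumes away. By the paper's own definition (and Figures~\ref{fig:2spokes}--\ref{fig:3spokes}), a $2$-spoke is the union of a nerve triangle with a \emph{non-nerve} triangle sharing an edge or a vertex, and an $n$-spoke iterates this outward; so the union of the $n$-spokes is strictly larger than $\Nrv K$. Each $n$-spoke still contains $p$ (it contains a nerve $1$-spoke), so your nerve-of-the-cover is still a full simplex; but the spokes are no longer star-shaped about $p$ (a triangle attached only at a vertex cannot be joined to $p$ by segments inside the spoke), and the intersection of two distinct $n$-spokes can be disconnected---for instance the nucleus together with a far-away shared edge of the two adjoined triangles---so the good-cover hypothesis fails. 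Your fallback (deformation retract each adjoined triangle onto $\Nrv K$) is exactly where all the content of the theorem sits, and it is also exactly where the paper's own argument stops: the general case is dismissed in a remark as a ``straightforward corollary'' of Lemma~\ref{thm:1spokeHomotopy}. As sketched, the retraction is undefined for vertex-attached triangles, and more seriously, for large $n$ the union of the $n$-spokes can exhaust an entire triangulated region that is not simply connected (an annular region around $\Nrv K$), which cannot be homotopy equivalent to the contractible set $\Nrv K$. So a complete proof requires either restricting to $1$-spokes, where your argument works, or adding a hypothesis on the ambient triangulation that lets you induct on $n$, attaching each new triangle along a contractible intersection.
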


A practical application of simplicial complexes is the study of the characteristics of surface shapes.  Such shapes can be quite complex when they are found in digital images.  By covering a part or all of a digital image with simplexes, we simplify the problem of describing object shapes, thanks to a knowledge of geometric features of either individual simplices or simplicial complexes.  The problem of describing complex shapes is further simplified by extracting feature values from nerves that are imbedded in simplicial complexes covering a spatial region.  This is essentially a point-free geometry approach introduced by ~\cite{Peters2016arXivProximalPhysicalGeometry}.

 \begin{figure}[!ht]
	\centering
	\subfigure[Edge 2-spoke]{
	 \includegraphics[width=25mm]{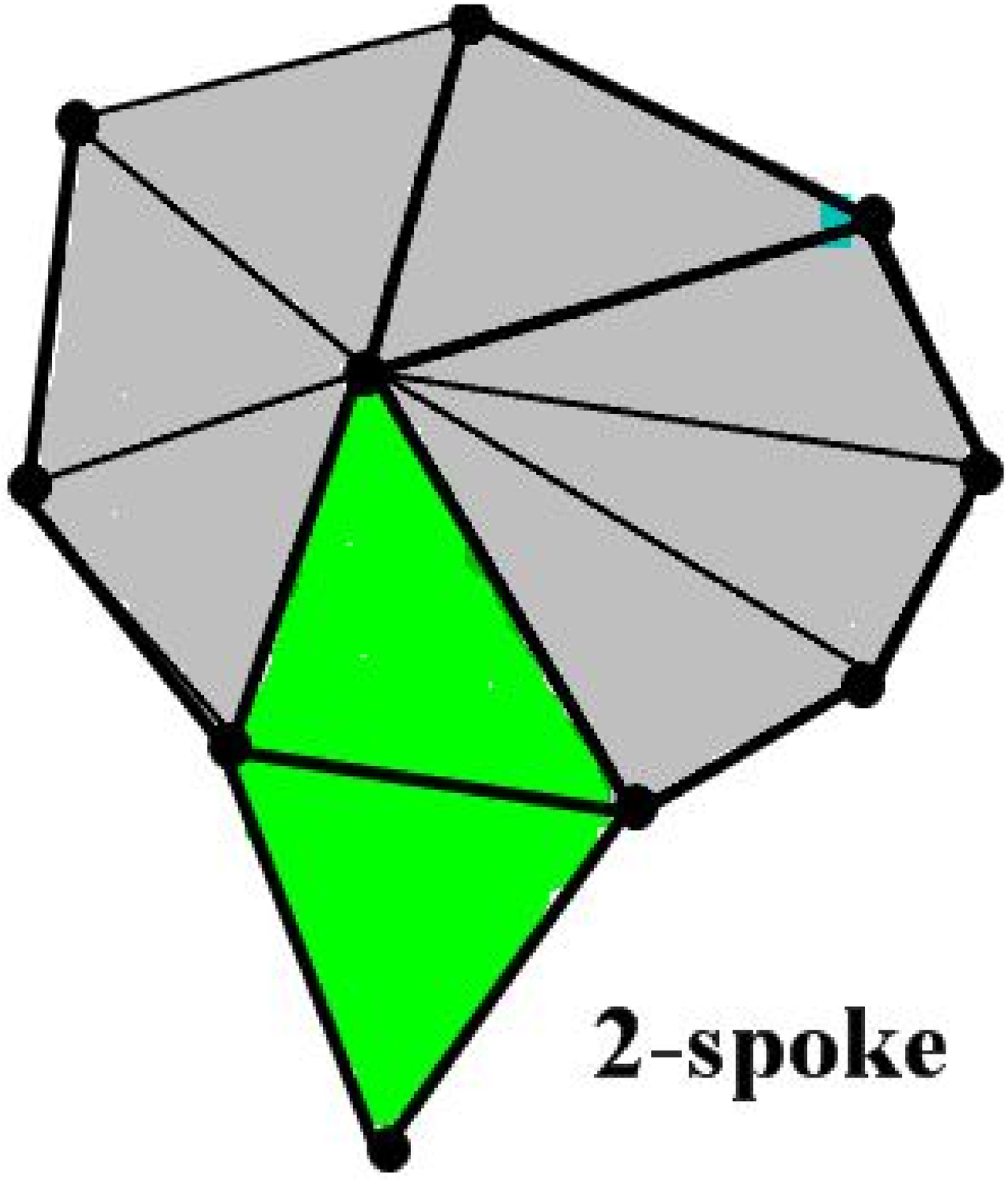}\label{fig:2-spokeEdge}}\hfil
	\subfigure[Edges 2-spokes]{
	 \includegraphics[width=25mm]{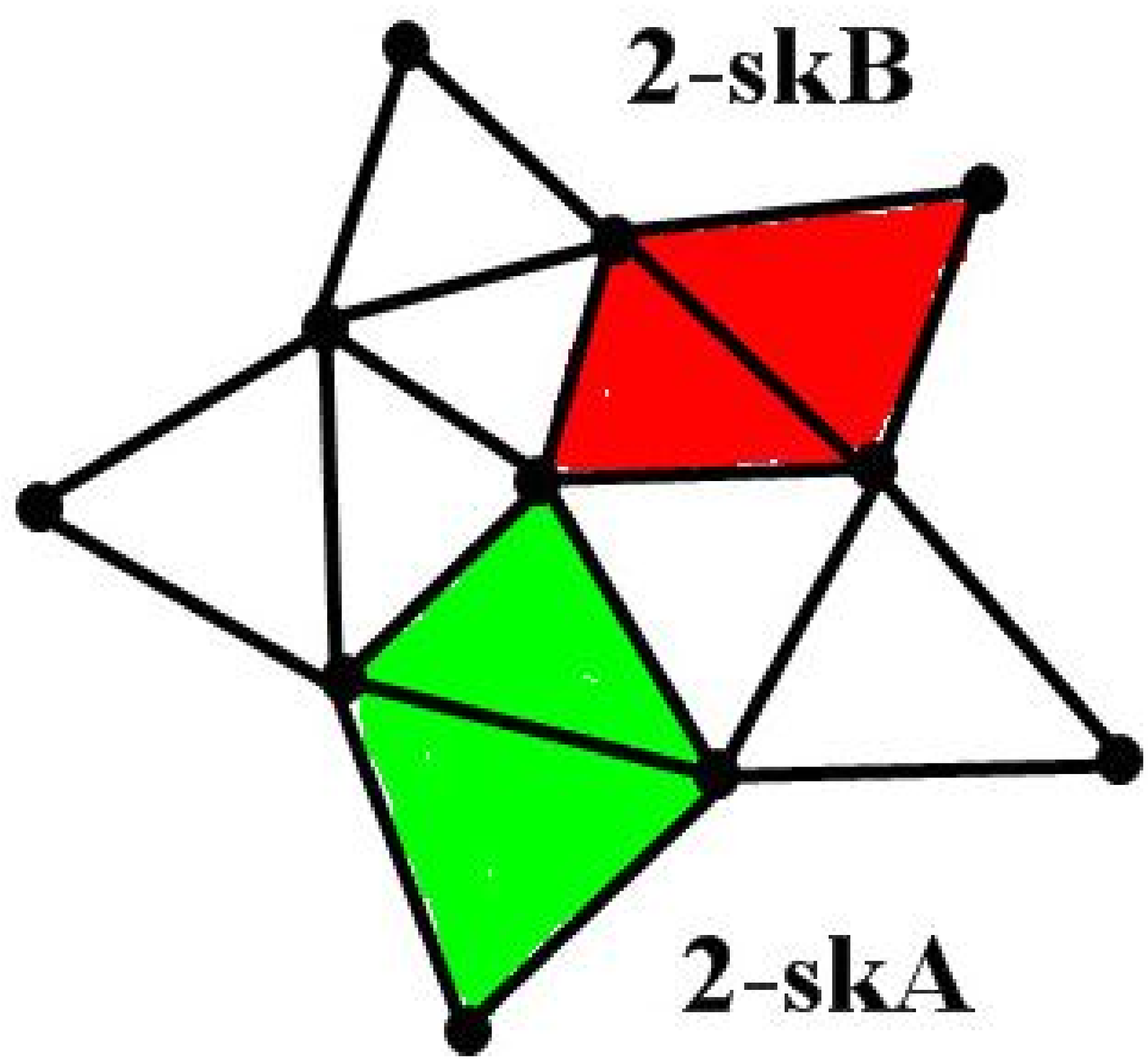}\label{fig:2spokeEdges}}
	\subfigure[Vertex 2-spoke]{
	 \includegraphics[width=25mm]{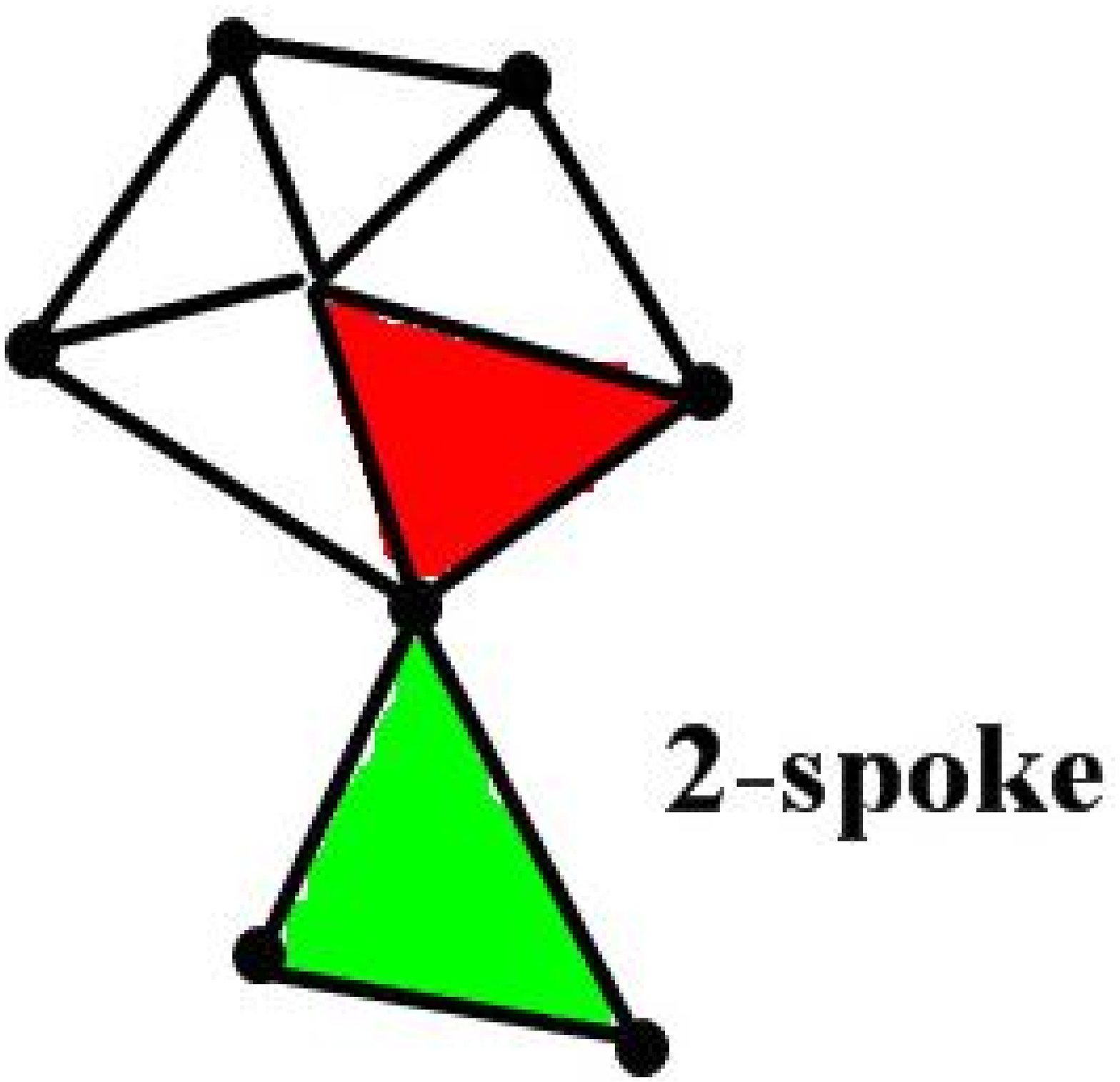}\label{fig:2spokeVertex}}
	\subfigure[Vertex 2-spokes]{
	 \includegraphics[width=25mm]{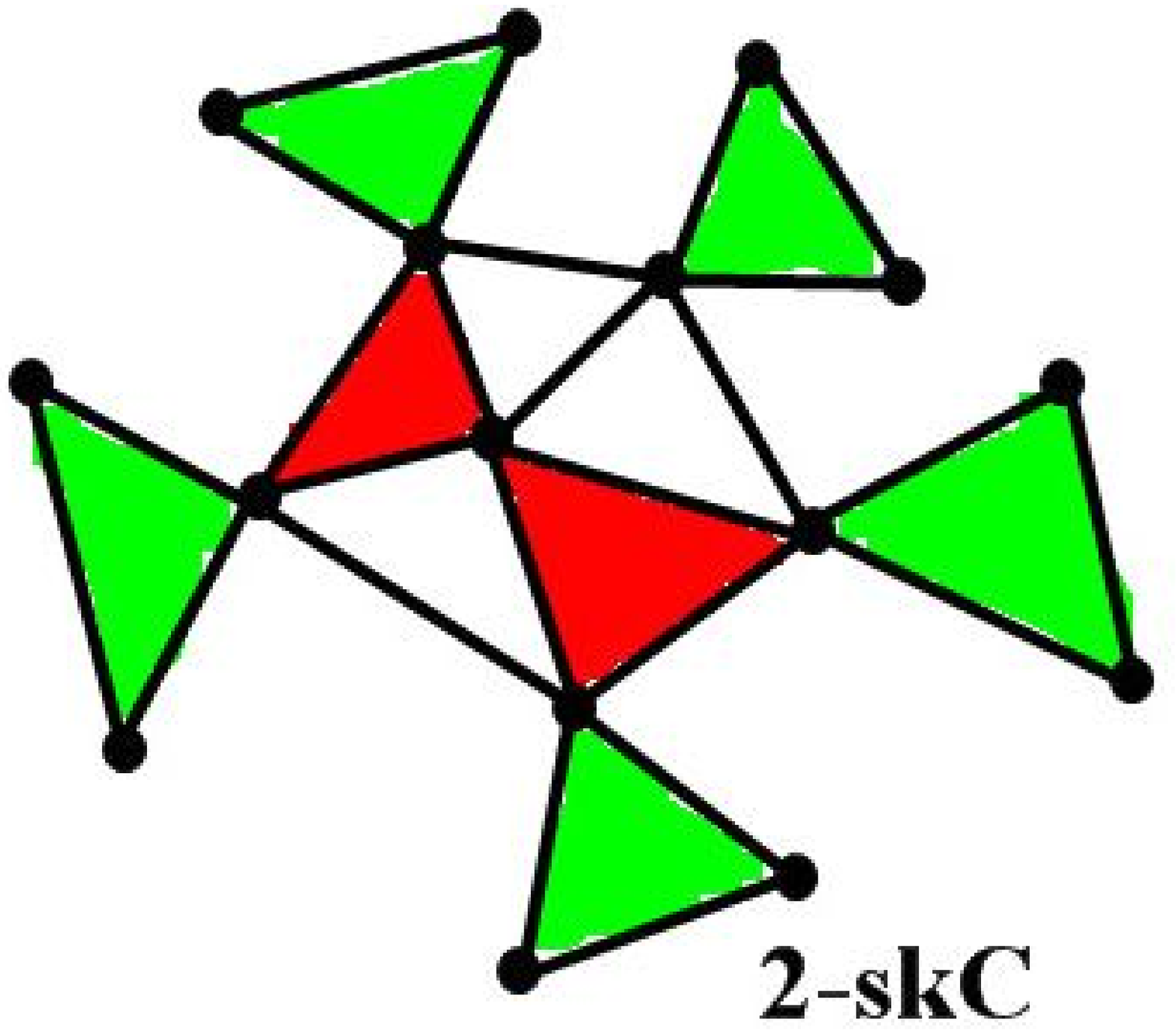}\label{fig:2spokeVertices}}
	\caption{Two forms of nerve 2-spokes}
	\label{fig:2spokes}
\end{figure}

 \begin{figure}[!ht]
	\centering
	\subfigure[Edge 3-spoke]{
	 \includegraphics[width=25mm]{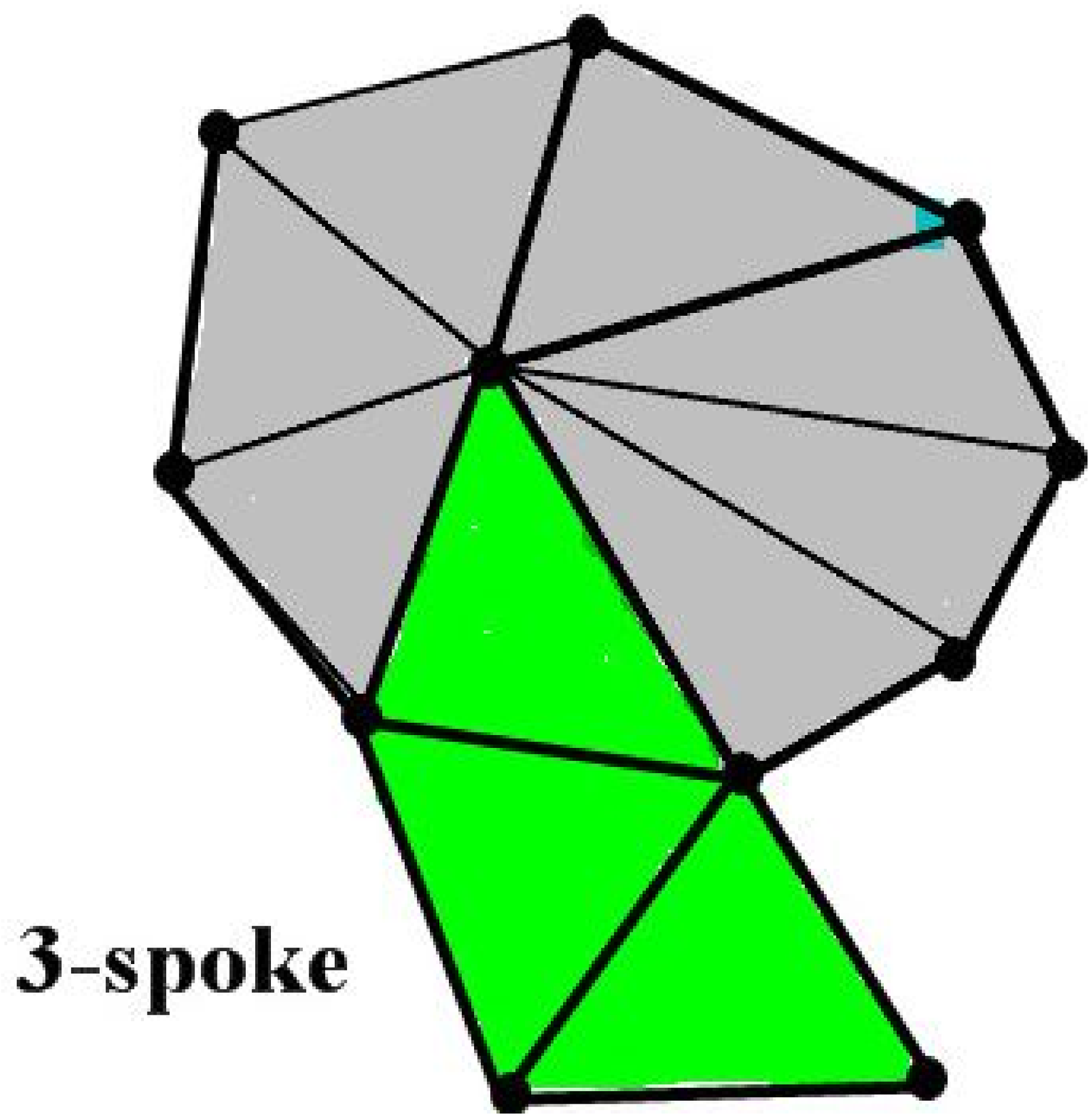}\label{fig:3spokeEdge}}\hfil
	\subfigure[Edges 3-spokes]{
	 \includegraphics[width=25mm]{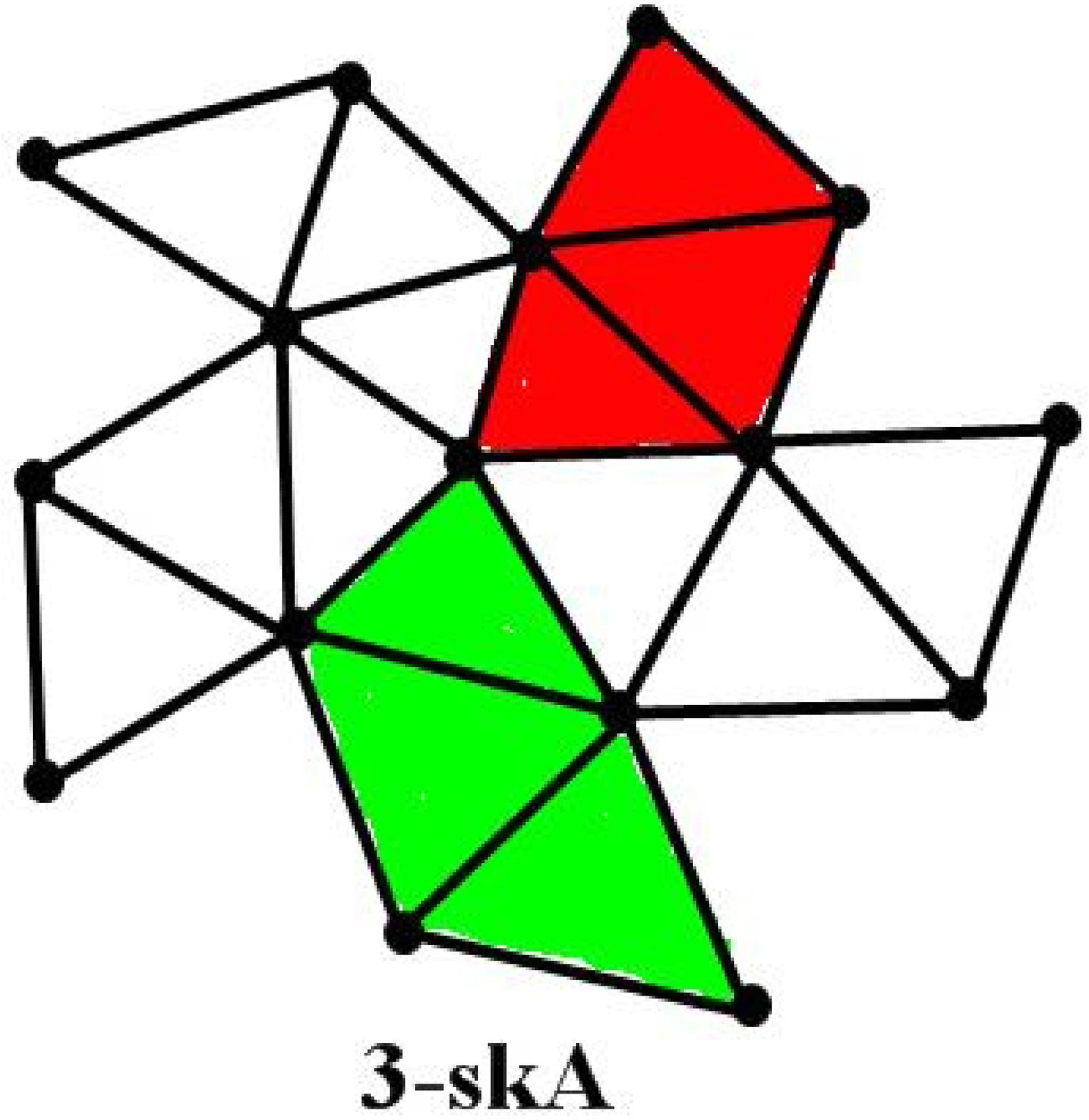}\label{fig:3spokeEdges}}
	\subfigure[Vertex 3-spoke]{
	 \includegraphics[width=25mm]{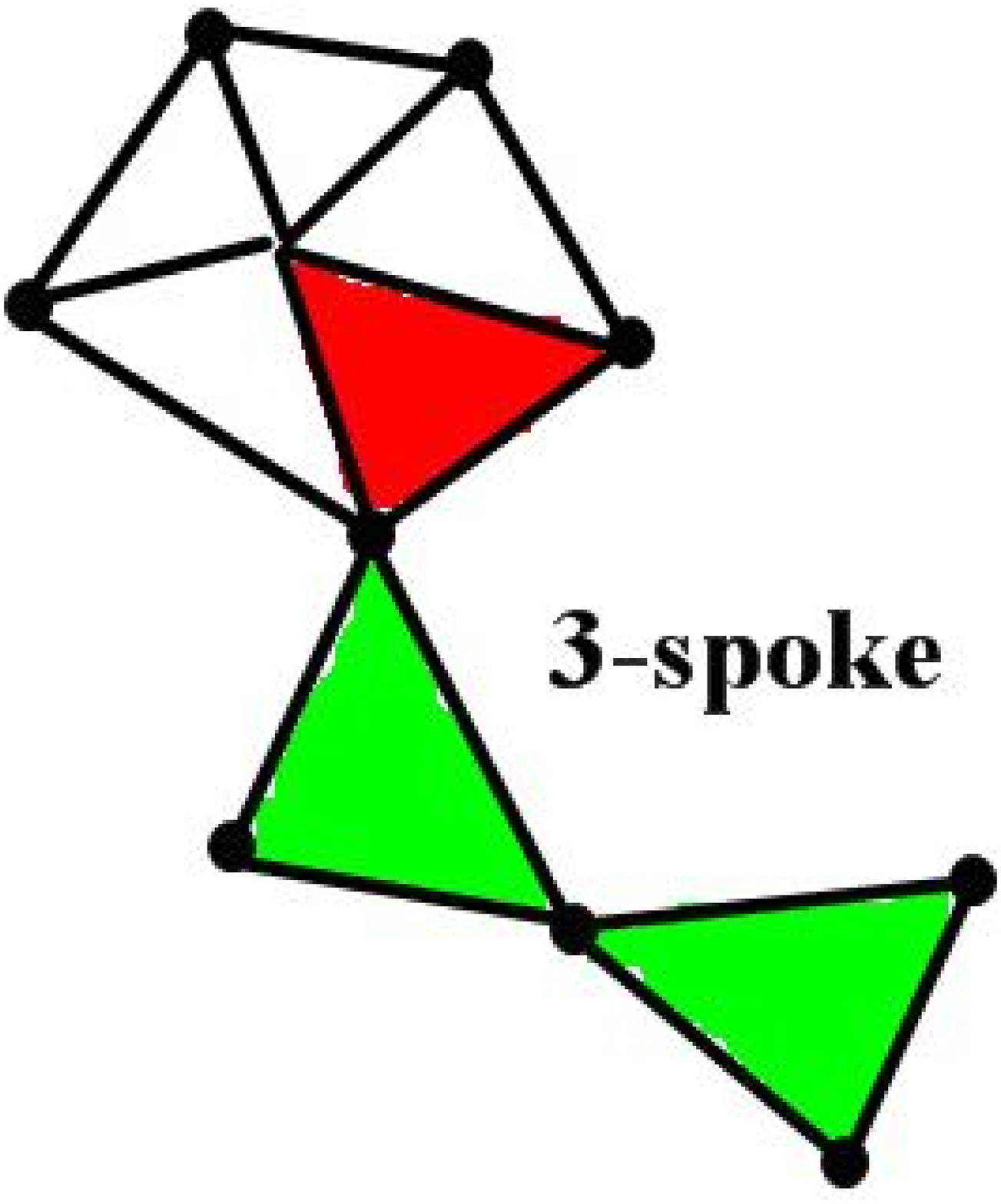}\label{fig:3spokeVertex}}
	\subfigure[Vertex 3-spokes]{
	 \includegraphics[width=25mm]{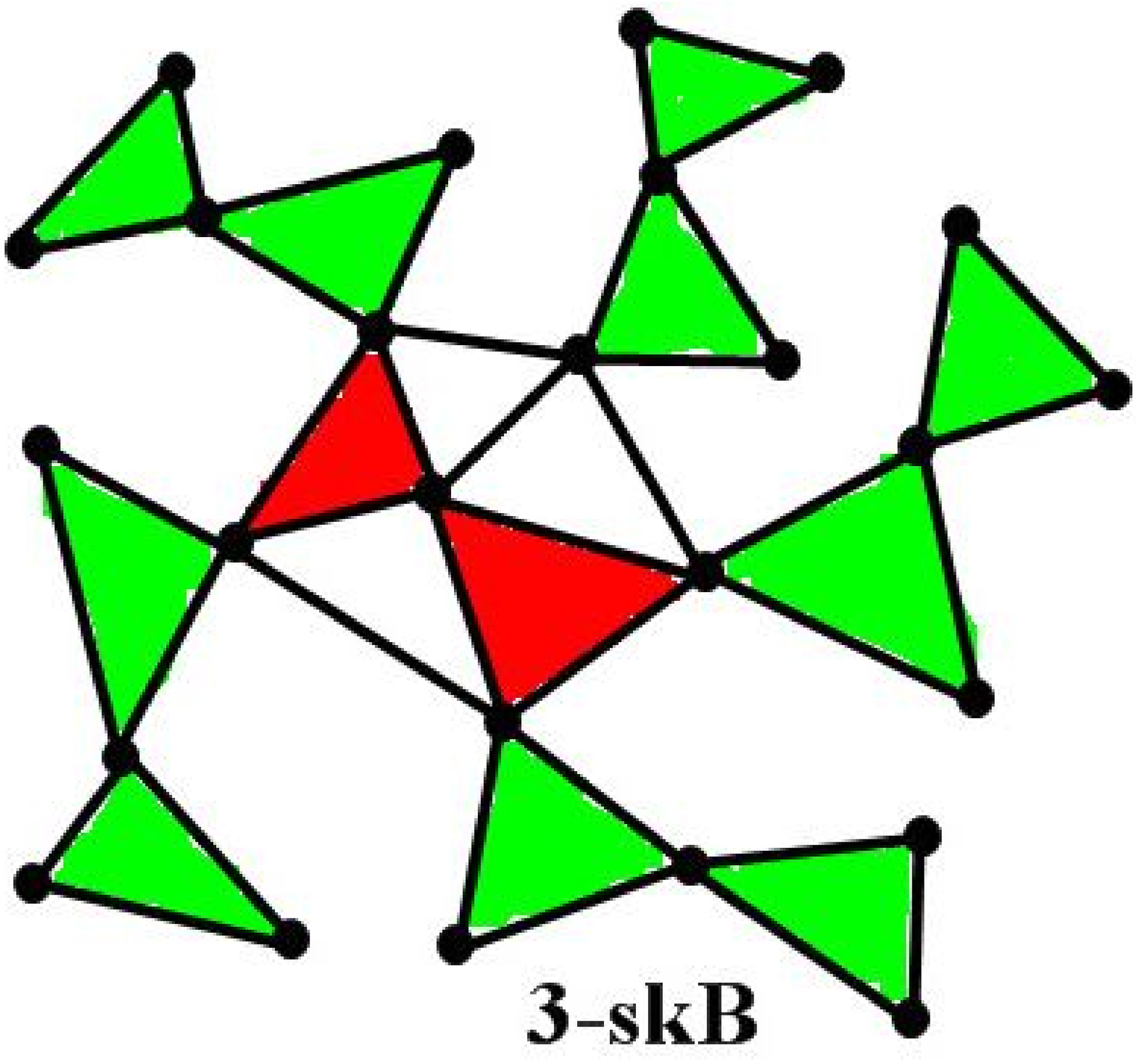}\label{fig:3spokeVertices}}
	\caption{Two forms of nerve 3-spokes}
	\label{fig:3spokes}
\end{figure}

\section{Preliminaries}
This section briefly introduces shape geometry, an extension of nerve complexes in the form of $k$-spokes, $k\geq 1$ and proximities useful in the study nerve complexes.

\subsection{Shape Geometry}
This section briefly introduces structures useful in the study of collections of close filled triangles in the triangulation of planar digital images.  A filled triangle is defined in terms of the boundary and the interior of a set of surface vertices.  A sample algorithm useful in triangulating a digital image is given in Alg.~\ref{alg:ImageGeometry}.$\mbox{}$\\
\vspace{3mm}

\begin{algorithm}[!ht]
\caption{Digital Image Geometry via Triangulation}\label{alg:ImageGeometry}

\SetKwData{Left}{left}
\SetKwData{This}{this}
\SetKwData{Up}{up}
\SetKwFunction{Union}{Union}
\SetKwFunction{FindCompress}{FindCompress}
\SetKwInOut{Input}{Input}
\SetKwInOut{Output}{Output}
\SetKwComment{tcc}{/*}{*/}

\Input{Read digital image $img$.}
\Output{Triangulation Mesh $\mathscr{M}$ covering an image.}
\emph{$img \longmapsto SelectedVertices$}\;
\emph{$S \leftarrow SelectedVertices$}\;
/* $S$ contains vertex coordinates used to triangulate $img$. */ \;
\emph{Continue  $\leftarrow$ True; Vertices  $\leftarrow$ emptyset}\;
\While {($S\neq\emptyset\ \&\ Continue = True$)}{
  /* Select neighbouring vertices $\left\{p,q,r\right\}\subset S$. */ \;
  \emph{$Vertices \assign Vertices\cup \left\{p,q,r\right\}$}\;
	/* $\fil \Delta(pqr)$ = intersection of three closed half spaces. */ \;
	\emph{$\mathscr{M}\ \assign\ \mathscr{M}\cup \fil \Delta(pqr)$}\;
	\emph{$S\ \assign\ S\setminus Vertices$}\;
	\eIf{$S\neq \emptyset$}{
	/* Continue */ }{
	\emph{Continue $\leftarrow$ False}\;
	} 
} 
\emph{$\mathscr{M}\longmapsto\ img$} \;
/* Use $\mathscr{M}$ to gain information about image shape geometry. */ \;
\end{algorithm}

Let $A\ \near\ B$ indicate that the nonempty sets $A$ and $B$ are close to each other in a space $X$ equipped with the proximity $\delta$ (for the details, see Section~\ref{sec:proximities}).  The \emph{\bf boundary} of $A$ (denoted $\mbox{bdy}A$) is the set of all points that are close to $A$ and close to the complement of $A$~\cite[\S 2.7, p. 62]{Naimpally2013}. The \emph{\bf closure} of $A$ (denoted by $\cl A$) is defined by
\[
\cl A = \left\{x\in X: x\ \near\ A\right\}\ \mbox{(Closure of $A$)}.
\]
 An important structure is the \emph{\bf interior} of $A$ (denoted $\mbox{int}A$), defined by $\mbox{int}A = \mbox{cl}A - \mbox{bdy}A$.  Let $p,q,r$ be points in the space $X$.  A \emph{\bf filled triangle} (denoted by $\fil \Delta(pqr)$) is defined by
\[
\fil \Delta(pqr) = \Int \Delta(pqr)\ \cup\ \bdy\ \Delta(pqr)\ \mbox{(filled triangle)}.
\]

When it is clear from the context that simplex triangles are referenced, we write $\Delta(pqr)$ or $\Delta A$ or simply $\Delta$, instead of $\fil \Delta(pqr)$.  Since image object shapes tend to be irregular, the geometry of 2-simplexes covering of an image gives a precise view of the shapes of image objects.
From the known properties of 2-simplexes ({\em e.g.}, interior angles, perimeter, area, lengths of sides), object shape interiors and contours covered by 2-simplexes can be described in a very accurate fashion.

\subsection{Nerve Spokes}
A nerve spoke extends outward from the nucleus of a nerve $\Nrv K$, giving rise to 1-spokes (2-simplexes within a nerve $\Nrv K$), 2-spokes (a 2-simlex that has either a vertex or an edge in common with a 1-spoke in $\Nrv K$), 3-spokes (a 2-simlex that has either a vertex or an edge in common with a 2-spoke in $\Nrv K$),..., and $n$-spokes (a $2$-simlex that has either a vertex or an edge in common with an $n-1$-spoke in $\Nrv K$). 

\begin{example}
Two forms of 2-spokes are shown in Fig.~\ref{fig:2spokes}.  Let $\Nrv K$ be a planar nerve.  An edge-based 2-spoke consists of a nerve $\Nrv K$ 1-spoke that has an edge in common with a non-nerve 1-spoke (see, {\em e.g.}, Fig.~\ref{fig:2-spokeEdge}).  A complete collection of edge-based 2-spokes on $\Nrv K$ is shown in Fig.~\ref{fig:2spokeEdges}.

An vertex-based 2-spoke consists of a nerve $\Nrv K$ 1-spoke that has a vertex in common with a non-nerve 1-spoke (see, {\em e.g.}, Fig.~\ref{fig:2spokeVertex}).  A complete collection of vertex-based 2-spokes on $\Nrv K$ is shown in Fig.~\ref{fig:2spokeVertices}.
\qquad \textcolor{blue}{\Squaresteel}
\end{example}

Edge-based and vertex-based 3-spokes are also possible.

\begin{example}
Two forms of 3-spokes are shown in Fig.~\ref{fig:3spokes}.  An edge-based 3-spoke consists of a nerve $\Nrv K$ 2-spoke that has an edge in common with a non-nerve 1-spoke (see, {\em e.g.}, Fig.~\ref{fig:3spokeEdge}).  A complete collection of edge-based 3-spokes on $\Nrv K$ is shown in Fig.~\ref{fig:3spokeEdges}.

An vertex-based 3-spoke consists of a nerve $\Nrv K$ 2-spoke that has a vertex in common with a non-nerve 1-spoke (see, {\em e.g.}, Fig.~\ref{fig:3spokeVertex}).  A complete collection of vertex-based 3-spokes on $\Nrv K$ is shown in Fig.~\ref{fig:3spokeVertices}.
\qquad \textcolor{blue}{\Squaresteel}
\end{example}

\setlength{\intextsep}{0pt}

\begin{wrapfigure}[16]{R}{0.40\textwidth}
\begin{minipage}{3.8 cm}
\centering
\includegraphics[width=45mm]{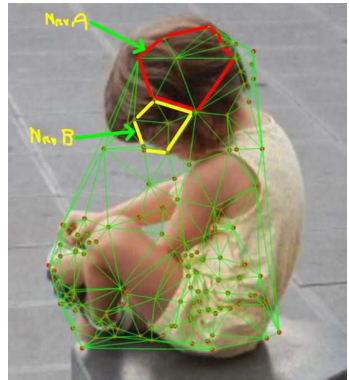}
\caption[]{\footnotesize\bf Overlapping Nerves}
\label{fig:overlap}
\end{minipage}
\end{wrapfigure}
$\mbox{}$\\
\vspace{5mm}

\subsection{Descriptions and Proximities}\label{sec:proximities}
This section briefly introduces two basic types of proximities, namely, traditional \emph{spatial proximity} and the more recent \emph{descriptive proximity} in the study of computational proximity~\cite{Peters2016CP}.  
Nonempty sets that have \emph{\bf spatial proximity} are close to each other, either asymptotically or with common points.  Sets with points in common are strongly proximal. Nonempty sets that have \emph{\bf descriptive proximity} are close, provided the sets contain one or more elements that have matching descriptions.  A commonplace example of descriptive proximity is a pair of paintings that have matching parts such as matching facial characteristics, matching eye, hair, skin colour, or matching nose, mouth, ear shape. Each of these proximities has a strong form.  A \emph{\bf strong proximity} embodies a special form of tightly twisted nextness of nonempty sets.  In simple terms, this means sets that share elements, have strong proximity.   
$\mbox{}$\\

\begin{figure}[!ht]
\centering
\includegraphics[width=45mm]{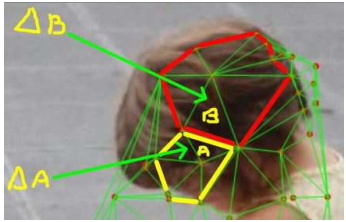}
\caption[]{Strongly near filled triangles}
\label{fig:snTriangles}
\end{figure}

\begin{example}
From Fig.~\ref{fig:overlap}, the pair of nerves $\Nrv A, \Nrv B$ exhibit strong proximity, since there is a 2-spoke that overlaps the nerves.  To see this, consider triangles $\Delta A, \Delta B$ in Fig.~\ref{fig:snTriangles} and let $\sk E$be a 2-spoke in $\Nrv A$ defined by $\sk E = \Delta A\cup \Delta B$.  Similarly, $\Nrv B$ has a 2-spoke $\sk H$, also defined by $\sk H = \Delta A\cup \Delta B$.  In other words, nerves $\Nrv A, \Nrv B$ overlap due to their 2-spokes $\sk E, \sk H$. In effect, $\Delta A, \Delta B$ are strongly near, since these triangles belong to overlapping 2-spokes.
\qquad \textcolor{blue}{\Squaresteel}
\end{example}

\emph{\bf Proximities} are nearness relations.  In other words, a \emph{proximity} between nonempty sets is a mathematical expression that specifies the closeness of the sets.   A \emph{\bf proximity space} results from endowing a nonempty set with one or more proximities.   Typically, a proximity space is endowed with a common proximity such as the proximities from \u Cech~\cite{Cech1966}, Efremovi\u c~\cite{Efremovic1952}, Lodato~\cite{Lodato1962}, and Wallman~\cite{Wallman1938}, or the more recent descriptive proximity~\cite{Peters2007FINearSets,Peters2007AMS,Peters2013mcsintro}.

A pair of nonempty sets in a proximity space are \emph{near} (\emph{close to each other}), provided the sets have one or more points in common or each set contains one or more points that are sufficiently close to each other.  Let $X$ be a nonempty set, $A,B,C\subset X$. E. \u{C}ech~\cite{Cech1966} introduced axioms for the simplest form of proximity $\delta_C$, which satisfies\\
\vspace{3mm}

\noindent \fbox{\bf \u Cech Proximity Axioms~\cite[\S 2.5, p. 439]{Cech1966}}
\begin{description}
\item[{\rm\bf (P1)}] $\emptyset \not\delta A, \forall A \subset X $.
\item[{\rm\bf (P2)}] $A\ \delta\ B \Leftrightarrow B \delta A$.
\item[{\rm\bf (P3)}] $A\ \cap\ B \neq \emptyset \Rightarrow A \near B$.
\item[{\rm\bf (P4)}] $A\ \delta\ (B \cup C) \Leftrightarrow A\ \delta\ B $ or $A\ \delta\ C$. \qquad \textcolor{blue}{$\blacksquare$}
\end{description}

The Lodato proximity $\delta_L$ satisfies the \u Cech proximity axioms and axiom (P5).\\
\vspace{3mm}

\noindent \fbox{\bf Lodato Proximity Axiom~\cite{Lodato1962}}
\begin{description}
\item[{\rm\bf (P5)}]  $A\ \delta_L\ B$ and $\{b\}\ \delta_L\ C$ for each $b \in B \ \Rightarrow A\ \delta_L\ C$.
\qquad \textcolor{blue}{$\blacksquare$}
\end{description}

\noindent We can associate a topology with the space $(X, \delta)$ by considering as closed sets those sets that coincide with their own closure.

Nonempty sets $A,B$ in a topological space $X$ equipped with the proximity $\sn$ are \emph{strongly near} [\emph{strongly contacted}] (denoted $A\ \sn\ B$), provided the sets have at least one point in common.   The strong contact relation $\sn$ was introduced in~\cite{Peters2015JangjeonMSstrongProximity} and axiomatized in~\cite{PetersGuadagni2015stronglyNear},~\cite[\S 6 Appendix]{Guadagni2015thesis}.\\

\noindent \fbox{\bf Strong Proximity~\cite[\S 1.2]{Peters2015AMSJstrongProximity} (see, also,~\cite[\S 1.5]{Peters2016CP},~\cite{Peters2015JangjeonMSstrongProximity,PetersGaudagni2015arXivVoronoiManifolds}).}\\

Let $X$ be a topological space, $A, B, C \subset X$ and $x \in X$.  The relation $\sn$ on the family of subsets $2^X$ is a \emph{strong proximity}, provided it satisfies the following axioms.

\begin{description}
\item[{\rm\bf (snN0)}] $\emptyset\ \sfar\ A, \forall A \subset X $, and \ $X\ \sn\ A, \forall A \subset X$.
\item[{\rm\bf (snN1)}] $A \sn B \Leftrightarrow B \sn A$.
\item[{\rm\bf (snN2)}] $A\ \sn\ B$ implies $A\ \cap\ B\neq \emptyset$. 
\item[{\rm\bf (snN3)}] If $\{B_i\}_{i \in I}$ is an arbitrary family of subsets of $X$ and  $A \sn B_{i^*}$ for some $i^* \in I \ $ such that $\Int(B_{i^*})\neq \emptyset$, then $  \ A \sn (\bigcup_{i \in I} B_i)$ 
\item[{\rm\bf (snN4)}]  $\mbox{int}A\ \cap\ \mbox{int} B \neq \emptyset \Rightarrow A\ \sn\ B$.  
\qquad \textcolor{blue}{$\blacksquare$}
\end{description}

\noindent When we write $A\ \sn\ B$, we read $A$ is \emph{strongly near} $B$ ($A$ \emph{strongly contacts} $B$).  The notation $A\ \notfar\ B$ reads $A$ is not strongly near $B$ ($A$ does not \emph{strongly contact} $B$). For each \emph{strong proximity} (\emph{strong contact}), we assume the following relations:
\begin{description}
\item[{\rm\bf (snN5)}] $x \in \Int (A) \Rightarrow x\ \sn\ A$ 
\item[{\rm\bf (snN6)}] $\{x\}\ \sn \{y\}\ \Leftrightarrow x=y$  \qquad \textcolor{blue}{$\blacksquare$} 
\end{description}

For strong proximity of the nonempty intersection of interiors, we have that $A \sn B \Leftrightarrow \Int A \cap \Int B \neq \emptyset$ or either $A$ or $B$ is equal to $X$, provided $A$ and $B$ are not singletons; if $A = \{x\}$, then $x \in \Int(B)$, and if $B$ too is a singleton, then $x=y$. It turns out that if $A \subset X$ is an open set, then each point that belongs to $A$ is strongly near $A$.  The bottom line is that strongly near sets always share points, which is another way of saying that sets with strong contact have nonempty intersection.   Let $\near$ denote a traditional proximity relation~\cite{Naimpally1970}.

\begin{proposition}\label{prop:2spoke}
Let $\Nrv A, \Nrv B$ be nerve complexes in a triangulated space $X$.  $\Nrv A\ \sn\ \Nrv B$, if and only if
2-spoke $\sk E\in \Nrv A\ \cap\ \Nrv B$ for some 2-spoke common to the pair of nerves.
\end{proposition}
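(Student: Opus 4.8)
The plan is to prove the biconditional by unwinding the definitions of strong proximity and of a 2-spoke, then applying axioms (snN2) and (snN4) from the strong proximity list. Recall that a 2-spoke $\sk E$ consists of a nerve 1-simplex together with a 1-simplex sharing an edge (or vertex) with it, so $\sk E$ is a union of two filled triangles. The key observation is that a 2-spoke, being a union of filled triangles, has nonempty interior, which is exactly what makes axioms (snN3) and (snN4) applicable.

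First I would prove the easy direction ($\Leftarrow$). Suppose $\sk E$ is a 2-spoke lying in $\Nrv A\ \cap\ \Nrv B$, so $\sk E \subseteq \Nrv A$ and $\sk E \subseteq \Nrv B$. Since $\sk E$ is a union of filled triangles it has nonempty interior, so $\Int(\sk E) \subseteq \Int(\Nrv A)$ and $\Int(\sk E) \subseteq \Int(\Nrv B)$, whence $\Int(\Nrv A)\ \cap\ \Int(\Nrv B) \supseteq \Int(\sk E) \neq \emptyset$. Then axiom (snN4), $\mbox{int}A\ \cap\ \mbox{int}B \neq \emptyset \Rightarrow A\ \sn\ B$, immediately gives $\Nrv A\ \sn\ \Nrv B$.

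For the converse ($\Rightarrow$), suppose $\Nrv A\ \sn\ \Nrv B$. By axiom (snN2), $\Nrv A\ \cap\ \Nrv B \neq \emptyset$, so the two nerve complexes share at least one point $x$. Since $\Nrv A$ and $\Nrv B$ are triangulated regions (unions of filled triangles in the plane), and using the remark after (snN6) that for non-singleton sets strong proximity is equivalent to $\Int A\ \cap\ \Int B \neq \emptyset$, the intersection contains an interior point, hence lies in a filled triangle $\Delta$ belonging to both triangulations — more precisely, the overlap of the two triangulations contains a 2-simplex. I would then argue that this shared 2-simplex, taken together with an adjacent 2-simplex in the common region, forms a 2-spoke $\sk E$ that is a 1-spoke of each nerve extended by a neighbouring triangle, i.e. $\sk E \in \Nrv A\ \cap\ \Nrv B$; the picture in Fig.~\ref{fig:snTriangles} and Fig.~\ref{fig:overlap}, where $\sk E = \Delta A\cup \Delta B$ sits inside both nerves, is exactly this situation.

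The main obstacle I expect is the converse direction: strong proximity $\Nrv A\ \sn\ \Nrv B$ only guarantees a shared point (or, for non-singletons, a shared interior point), whereas producing an actual common \emph{2-spoke} requires that the overlap region be large enough to contain two adjacent filled triangles common to both triangulations. This needs a mild hypothesis on how the two nerves are triangulated — essentially that where they overlap they are triangulated compatibly, so that a shared interior point forces a shared 2-simplex, and the nucleus-plus-adjacency structure of a nerve then supplies the second triangle of the spoke. I would either build this compatibility into the standing assumptions on the triangulated space $X$ (as is implicit in the running examples), or note that it is exactly the content of the overlap picture; with that in hand, the argument closes by exhibiting $\sk E = \Delta A \cup \Delta B$ explicitly as in the preceding example and invoking (snN4) as above to confirm $\Nrv A\ \sn\ \Nrv B$ is equivalent to its existence.
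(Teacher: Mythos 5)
The paper states Proposition~\ref{prop:2spoke} without any proof, so there is no official argument to measure yours against; judged on its own, your proposal has the right overall shape but both directions need repair. In the backward direction you invoke axiom (snN4), which requires $\Int(\Nrv A)\cap\Int(\Nrv B)\neq\emptyset$. But a 2-spoke common to the two nerves is, in the paper's intended sense (see the example built on Fig.~\ref{fig:snTriangles}), a union $\sk E=\Delta A\cup\Delta B$ in which $\Delta A$ is a filled triangle of $\Nrv A$, $\Delta B$ is a filled triangle of $\Nrv B$, and the two triangles meet only in a common edge or vertex; by definition the second triangle of a 2-spoke on $\Nrv A$ is a \emph{non-nerve} triangle. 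So your step $\Int(\sk E)\subseteq\Int(\Nrv A)$ fails: $\sk E$ is not contained in either nerve as a point set, and the two nerves may share only boundary points, in which case (snN4) is silent. What does work is the paper's working definition of $\sn$ (strongly near sets are sets that share points): $\emptyset\neq\Delta A\cap\Delta B\subseteq\left(\bigcup\Nrv A\right)\cap\left(\bigcup\Nrv B\right)$ already yields $\Nrv A\ \sn\ \Nrv B$. Note that (snN2) is stated only as the converse implication, so you must appeal to the definition rather than to an axiom here.

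In the forward direction you correctly sense that a shared point does not obviously produce a shared 2-spoke, but you misdiagnose the obstacle as a compatibility condition between two different triangulations. The proposition takes place in a single triangulated space $X$: both nerves are subcomplexes of one and the same triangulation, so no compatibility hypothesis is needed or available to add. The argument you want is: by (snN2), $\Nrv A\cap\Nrv B\neq\emptyset$; two subcomplexes of one simplicial complex that share a point share a face, hence at least a vertex $v$; choose a filled triangle $\Delta A\in\Nrv A$ and a filled triangle $\Delta B\in\Nrv B$ incident with $v$. These have a vertex (or an edge) in common, so $\sk E=\Delta A\cup\Delta B$ is a vertex-based (or edge-based) 2-spoke of each nerve, which is exactly the required conclusion. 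Your insistence that the overlap contain a full 2-simplex belonging to both nerves is stronger than necessary, precisely because the paper admits vertex-based 2-spokes; dropping that requirement closes the gap without any added hypotheses.
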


\begin{corollary}
Let $\Nrv A, \Nrv B$ be nerve complexes in a triangulated space $X$.  
A 3-spoke $\sk H\in \Nrv A\ \cup\ \Nrv B$ for some 3-spoke common to the pair of nerves.
\end{corollary}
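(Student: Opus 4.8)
The plan is to read this as an immediate consequence of Proposition~\ref{prop:2spoke} together with the inductive definition of $k$-spokes, so the work is really in the bookkeeping rather than in any new topology. First I would invoke Proposition~\ref{prop:2spoke}: in the situation where a common spoke structure is meaningful, $\Nrv A\ \sn\ \Nrv B$, so there is a 2-spoke $\sk E$ common to the pair, $\sk E\in \Nrv A\ \cap\ \Nrv B$. By the discussion preceding the proposition this $\sk E$ has the form $\sk E = \Delta A\cup \Delta B$, where (say) $\Delta A$ is a 1-spoke belonging both to $\Nrv A$ and to $\Nrv B$. The one fact to record here is that $\sk E$ simultaneously plays the role of a 2-spoke on $\Nrv A$ and a 2-spoke on $\Nrv B$.

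Next I would extend $\sk E$ by one triangulation step. By definition an $n$-spoke is a 2-simplex that shares an edge or a vertex with an $(n-1)$-spoke of the nerve, equivalently (in the union formulation) the union of the $(n-1)$-spoke with such an adjacent 2-simplex. Since $X$ is triangulated and $\sk E$ is a filled region with $\Int \sk E \neq \emptyset$ that is not all of $X$, some boundary edge or vertex of $\sk E$ is incident with a further 2-simplex $\Delta C$ of the triangulation with $\Delta C\not\subseteq \sk E$. Put $\sk H\assign \sk E\cup \Delta C = \Delta A\cup \Delta B\cup \Delta C$. Because $\sk E\in \Nrv A$, the triangle $\Delta C$ makes $\sk H$ a 3-spoke on $\Nrv A$; because $\sk E\in \Nrv B$, the same $\sk H$ is a 3-spoke on $\Nrv B$. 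Hence $\sk H$ belongs to the family of 3-spokes generated by $\Nrv A$ and to the family generated by $\Nrv B$, so $\sk H\in \Nrv A\ \cup\ \Nrv B$ and is a 3-spoke common to the pair (and in fact $\sk H\in \Nrv A\ \cap\ \Nrv B$, since the witnessing 2-spoke $\sk E$ lies in the intersection); if a homotopy-theoretic reading of ``common'' is wanted, Theorem~\ref{thm:nerveSpokeTheorem} then lets one pass this statement to the level of the underlying spaces.

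The step I expect to be the main obstacle is not any estimate but pinning down the meaning of $\sk H\in \Nrv A\ \cup\ \Nrv B$: the symbol $\in$ is being used for ``is a spoke generated by,'' and one must check that extending a 2-spoke lying in $\Nrv A\ \cap\ \Nrv B$ genuinely produces a 3-spoke attached to \emph{each} nerve, not merely to their union — which is exactly why the argument needs $\sk E$ in the intersection, as supplied by Proposition~\ref{prop:2spoke}. A secondary point to dispatch is the existence of the adjacent triangle $\Delta C$: this uses that the triangulation is nondegenerate near $\sk E$ (i.e. $\sk E\neq X$), which I would add as a harmless standing hypothesis, the degenerate case $\sk E = X$ being handled by simply taking $\sk H = \sk E$.
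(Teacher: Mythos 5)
Your proposal is correct and takes essentially the same route as the paper, whose entire proof reads ``Immediate from Prop.~\ref{prop:2spoke} and the definition of a 3-spoke''; you have simply filled in the bookkeeping (extracting the common 2-spoke $\sk E$ from Proposition~\ref{prop:2spoke} and extending it by one adjacent triangle to a 3-spoke attached to both nerves). Your added caveats --- that the corollary tacitly assumes $\Nrv A\ \sn\ \Nrv B$, that an adjacent triangle must exist, and that $\in$ is being used loosely --- are fair observations about the statement itself, not deviations from the paper's argument.
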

\begin{proof}
Immediate from Prop.~\ref{prop:2spoke} and the definition of a 3-spoke.
\end{proof}

\subsection{Descriptive Proximities}
In the run-up to a close look at extracting features of triangulated image objects, we first consider descriptive proximities, fully covered in~\cite{DiConcilio2016arXivDescriptiveProximities} and briefly introduced, here.  There are two basic types of \emph{object features}, namely, \emph{object characteristic} and \emph{object location}.  For example, an object characteristic of a picture point is colour.  
Descriptive proximities resulted from the introduction of the descriptive intersection pairs of nonempty sets. \\
\vspace{3mm} 

\noindent \fbox{\bf Descriptive Intersection~\cite{Peters2013mcsintro} and~\cite[\S 4.3, p. 84]{Naimpally2013}.}
\begin{description}
\item[{\rm\bf ($\boldsymbol{\Phi}$)}] $\Phi(A) = \left\{\Phi(x)\in\mathbb{R}^n: x\in A\right\}$, set of feature vectors.
\item[{\rm\bf ($\boldsymbol{\dcap}$)}]  $A\ \dcap\ B = \left\{x\in A\cup B: \Phi(x)\in \Phi(A) \& \in \Phi(x)\in \Phi(B)\right\}$.
\qquad \textcolor{blue}{$\blacksquare$}
\end{description}
$\mbox{}$\\
\vspace{3mm}

\setlength{\intextsep}{0pt}

\begin{wrapfigure}[15]{R}{0.55\textwidth}
\begin{minipage}{4.2 cm}
\centering
\includegraphics[width=65mm]{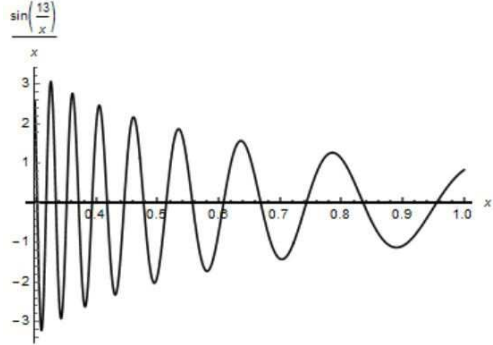}
\caption[]{\footnotesize\bf Strongly Near}
\label{fig:stronglyNearSets}
\end{minipage}
\end{wrapfigure}
$\mbox{}$\\
\vspace{5mm}

The descriptive proximity $\delta_{\Phi}$ was introduced in~\cite{Peters2007FINearSets,Peters2007AMS,Peters2013mcsintro}.  Let $\Phi(x)$ be a feature vector for $x\in X$, a nonempty set of non-abstract points such as picture points.  $A\ \delta_{\Phi}\ B$ reads $A$ is descriptively near $B$, provided $\Phi(x) = \Phi(y)$ for at least one pair of points, $x\in A, y\in B$.  The proximity $\delta$ in the \u{C}ech, Efremovi\u c, and Wallman proximities is replaced by $\dnear$.  Then swapping out $\near$ with $\dnear$ in each of the Lodato axioms defines a descriptive Lodato proximity that satisfies the following axioms.\\
\vspace{3mm}

\noindent \fbox{\bf Descriptive Lodato Axioms~\cite[\S 4.15.2]{Peters2013springer}}
\begin{description}
\item[{\rm\bf (dP0)}] $\emptyset\ \dfar\ A, \forall A \subset X $.
\item[{\rm\bf (dP1)}] $A\ \dnear\ B \Leftrightarrow B\ \dnear\ A$.
\item[{\rm\bf (dP2)}] $A\ \dcap\ B \neq \emptyset \Rightarrow\ A\ \dnear\ B$.
\item[{\rm\bf (dP3)}] $A\ \dnear\ (B \cup C) \Leftrightarrow A\ \dnear\ B $ or $A\ \dnear\ C$.
\item[{\rm\bf (dP4)}] $A\ \dnear\ B$ and $\{b\}\ \dnear\ C$ for each $b \in B \ \Rightarrow A\ \dnear\ C$. \qquad \textcolor{blue}{$\blacksquare$}
\end{description}

Nonempty sets $A,B$ in a proximity space $X$ are \emph{strongly near} (denoted $A\ \sn\ B$), provided the sets share points.  Strong proximity $\sn$ was introduced in~\cite[\S 2]{Peters2015AMSJstrongProximity} and completely axiomatized in~\cite{PetersGuadagni2015stronglyNear} (see, also,~\cite[\S 6 Appendix]{Guadagni2015thesis}).

\begin{proposition}\label{prop:dnear}
Let $\left(X,\dnear\right)$ be a descriptive proximity space, $A,B\subset X$.  Then $A\ \dnear\ B \Rightarrow A\ \dcap\ B\neq \emptyset$.
\end{proposition}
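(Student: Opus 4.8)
The plan is to argue directly from the concrete definition of the descriptive proximity $\dnear$ attached to the feature map $\Phi$. Note that the abstract descriptive Lodato axiom (dP2) only supplies the converse implication $A\ \dcap\ B \neq \emptyset \Rightarrow A\ \dnear\ B$, so it cannot be invoked here; the proof must instead use the characterization of $\dnear$ in terms of matching feature vectors.

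First I would unpack the hypothesis. By the definition of the descriptive proximity, $A\ \dnear\ B$ holds exactly when there is at least one pair of points $x\in A$, $y\in B$ with $\Phi(x) = \Phi(y)$. Fix such a pair $x, y$. Next I would produce a witness for the descriptive intersection: since $x\in A$, we have $x\in A\cup B$ and $\Phi(x)\in\Phi(A)$; and since $\Phi(x) = \Phi(y)$ with $y\in B$, also $\Phi(x)\in\Phi(B)$. Hence $x$ satisfies both membership conditions in the definition of $A\ \dcap\ B$, so $x\in A\ \dcap\ B$, and therefore $A\ \dcap\ B\neq\emptyset$, as claimed.

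I do not expect a genuine obstacle here; the only point requiring care is not to transport intuition from spatial proximity, where $A\ \near\ B$ need not force $A\cap B\neq\emptyset$. The descriptive relation is defined precisely by the existence of a pair of points with equal descriptions, which is exactly the information a nonempty descriptive intersection records. In effect, Proposition~\ref{prop:dnear} together with (dP2) yields the equivalence $A\ \dnear\ B \Leftrightarrow A\ \dcap\ B\neq\emptyset$, and it is worth stating this consequence explicitly after the proof.
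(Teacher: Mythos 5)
Your proposal is correct and follows essentially the same route as the paper's proof: both unpack $A\ \dnear\ B$ into the existence of $x\in A$, $y\in B$ with $\Phi(x)=\Phi(y)$ and conclude that the descriptive intersection is nonempty. You merely make explicit the verification that $x$ itself witnesses membership in $A\ \dcap\ B$, which the paper leaves implicit.
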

\begin{proof}
$A\ \dnear\ B \Leftrightarrow$ there is at least one $x\in A, y\in B$ such that $\Phi(x)=\Phi(y)$ (by definition of $A\ \dnear\ B$).  Hence, $A\ \dcap\ B\neq \emptyset$.
\end{proof}

Next, consider a proximal form of a Sz\'{a}z relator~\cite{Szaz1987}.  A \emph{proximal relator} $\mathscr{R}$ is a set of relations on a nonempty set $X$~\cite{Peters2016relator}.  The pair $\left(X,\mathscr{R}\right)$ is a proximal relator space.  The connection between $\sn$ and $\near$ is summarized in Prop.~\ref{thm:sn-implies-near}.

\begin{lemma}\label{thm:sn-implies-near}
Let $\left(X,\left\{\near,\dnear,\sn\right\}\right)$ be a proximal relator space, $A,B\subset X$.  Then 
\begin{compactenum}[1$^o$]
\item $A\ \sn\ B \Rightarrow A\ \near\ B$.
\item $A\ \sn\ B \Rightarrow A\ \dnear\ B$.
\end{compactenum}
\end{lemma}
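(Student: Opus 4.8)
The plan is to read off both implications directly from the axioms already recorded in the excerpt, the key being that strong nearness is set up so that strongly near sets share an actual point. First I would prove $(1^o)$. Suppose $A\ \sn\ B$. Axiom (snN2) immediately gives $A\ \cap\ B\neq\emptyset$. Since $\near$ is, by hypothesis on the relator $\mathscr{R}$, (at least) a \u{C}ech proximity on $X$, axiom (P3) applies and yields $A\ \cap\ B\neq\emptyset\Rightarrow A\ \near\ B$. Composing the two implications gives $A\ \sn\ B\Rightarrow A\ \near\ B$.

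For $(2^o)$ I would again start from $A\ \sn\ B$ and invoke (snN2) to pick a point $x\in A\ \cap\ B$. Then $\Phi(x)\in\Phi(A)$ and $\Phi(x)\in\Phi(B)$ hold trivially (the same feature vector witnesses membership on both sides), so $x\in A\ \dcap\ B$ and hence $A\ \dcap\ B\neq\emptyset$. Because $\dnear$ is taken to be a descriptive Lodato proximity, axiom (dP2) gives $A\ \dcap\ B\neq\emptyset\Rightarrow A\ \dnear\ B$; equivalently, one may phrase this last step as the observation underlying Prop.~\ref{prop:dnear}, namely that a shared spatial point is automatically a shared descriptive point. Either way, $A\ \sn\ B\Rightarrow A\ \dnear\ B$, completing the lemma.

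I do not expect a genuine obstacle here: the substantive content lives entirely in axiom (snN2), and the rest is chaining standard proximity axioms. The only point needing care is bookkeeping about which member of the relator carries which axioms — that $\near$ is assumed to satisfy the \u{C}ech axioms so that (P3) is available, and that $\dnear$ is assumed to satisfy the descriptive Lodato axioms so that (dP2) is available. Once these identifications are fixed, no recourse to the auxiliary clauses (snN5), (snN6) or to any singleton/separation considerations is required, and the proof is essentially immediate.
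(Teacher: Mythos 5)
Your proof is correct and follows essentially the same route as the paper: axiom (snN2) yields $A\cap B\neq\emptyset$, then (P3) gives $A\ \near\ B$ and a common point gives $A\ \dcap\ B\neq\emptyset$, whence (dP2) gives $A\ \dnear\ B$. (Your citation of (P3) is in fact more accurate than the paper's, which refers to ``Lodato Axiom (P2)'' where the nonempty-intersection axiom (P3) is clearly intended.)
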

\begin{proof}$\mbox{}$\\
1$^o$: From Axiom (snN2), $A\ \sn\ B$ implies $A\ \cap\ B\neq \emptyset$, which implies $A\ \near\ B$ (from Lodato Axiom (P2)).\\
2$^o$: From 1$^o$, there are $x\in A, y\in B$\ common to $A$ and $B$.  Hence, $\Phi(x) = \Phi(y)$, which implies $A\ \dcap\ B\neq \emptyset$.  Then, from the descriptive Lodato Axiom (dP2), $A\ \dcap\ B \neq \emptyset \Rightarrow\ A\ \dnear\ B$. This gives the desired result.
\end{proof}

\begin{theorem}\label{thm:spoke}
Let $\left(X,\left\{\dnear,\sn\right\}\right)$ be a proximal relator triangulated space, $\Nrv A,\Nrv B\subset 2^X$.  Then
\begin{compactenum}[1$^o$]
\item $\Nrv A\ \sn\ \Nrv B$ implies $\Nrv A\ \dnear\ \Nrv B$.
\item A 2-spoke $\sk A\ \in\ \Nrv A\cap \Nrv B$ implies if $\sk A\ \in\ \Nrv A\ \dcap\ \Nrv B$.
\item A 2-spoke $\sk A\ \in\ \Nrv A\cap \Nrv B$ implies $\Nrv A\ \dnear\ \Nrv B$.
\end{compactenum}
\end{theorem}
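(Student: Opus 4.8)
The plan is to settle the three assertions in sequence, each reducing to material already in hand; no part requires genuine analytic work, only careful unwinding of the descriptive intersection.

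For 1$^o$ I would simply invoke Lemma~\ref{thm:sn-implies-near}.2$^o$ with $A \assign \Nrv A$ and $B \assign \Nrv B$. Since the ambient space is a proximal relator space carrying $\sn$, $\near$ and $\dnear$, the hypothesis $\Nrv A\ \sn\ \Nrv B$ gives $A\ \sn\ B \Rightarrow A\ \dnear\ B$ verbatim, i.e. $\Nrv A\ \dnear\ \Nrv B$. Alternatively one can route through axiom (snN2) to get a shared point and then (dP2), but citing the lemma is cleanest.

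For 2$^o$ the key observation is that ordinary intersection is the ``undescribed'' special case of descriptive intersection. Reading $\Nrv A$ and $\Nrv B$ as the point sets their simplexes cover, if a 2-spoke $\sk A$ lies in $\Nrv A\cap\Nrv B$ then every point $x\in\sk A$ belongs to both $\Nrv A$ and $\Nrv B$, so $\Phi(x)\in\Phi(\Nrv A)$ and $\Phi(x)\in\Phi(\Nrv B)$; by the definition of $\dcap$ this places every point of $\sk A$ in $\Nrv A\ \dcap\ \Nrv B$, hence $\sk A\subseteq \Nrv A\ \dcap\ \Nrv B$, which is exactly the meaning of $\sk A\ \in\ \Nrv A\ \dcap\ \Nrv B$. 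For 3$^o$ I would then combine 2$^o$ with the descriptive Lodato axiom (dP2): $\sk A$ is a nonempty filled triangle sitting inside $\Nrv A\ \dcap\ \Nrv B$, so $\Nrv A\ \dcap\ \Nrv B\neq\emptyset$, and (dP2) yields $\Nrv A\ \dnear\ \Nrv B$. (Equivalently, a common 2-spoke forces $\Nrv A\ \sn\ \Nrv B$ by Proposition~\ref{prop:2spoke}, after which 1$^o$ closes the argument.)

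The only delicate point — and the step I would watch most carefully — is the bookkeeping in 2$^o$: one must fix whether $\Nrv A$ and $\Nrv B$ are being manipulated as families of simplexes or as their underlying carriers, and confirm that ``$\sk A\in\Nrv A\ \dcap\ \Nrv B$'' is to be read as ``$\sk A$ is contained in the descriptive intersection of the carriers.'' Once that convention is pinned down, each clause is a one-line deduction from a previously stated result or axiom, so I do not expect any substantive obstacle.
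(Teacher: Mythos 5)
Your proposal is correct and follows essentially the same route as the paper: part 1$^o$ by citing Lemma~\ref{thm:sn-implies-near}, part 2$^o$ by observing that points shared by both nerves have matching descriptions and hence lie in $\Nrv A\ \dcap\ \Nrv B$, and part 3$^o$ by passing from the nonempty descriptive intersection to $\dnear$ via Axiom (dP2). Your explicit care in 2$^o$ about whether nerves are treated as families of simplexes or as their underlying carriers is a point the paper glosses over (it instead detours through Proposition~\ref{prop:2spoke}), but the substance of the argument is the same.
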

\begin{proof}$\mbox{}$\\
1$^o$: Immediate from part 2$^o$ Lemma~\ref{thm:sn-implies-near}.\\
2$^o$: From Prop.~\ref{prop:2spoke}, $\sk A\ \in\ \Nrv A\cap \Nrv B$, if and only if $\Nrv A\ \sn\ \Nrv B$.  Consequently, there are members of the 2-spoke $\sk A$ common to $\Nrv A,\Nrv B$, which have the same description. Hence, $\sk A\ \in\ \Nrv A\ \dcap\ \Nrv B$.\\
3$^o$: Immediate from 2$^o$ and Lemma~\ref{thm:sn-implies-near}.
\end{proof}

\begin{example}
Let $X$ be a topological space endowed with the strong proximity $\sn$ and $A = \left\{(x,0): 0.1\leq x\leq 1\right\}$,$B = \left\{(x,\frac{1}{x}sin(13/x)):0.1\leq x\leq 1\right\}$.  In this case, $A,B$ represented by
Fig.~\ref{fig:stronglyNearSets} are strongly near sets with many points in common.  
\qquad \textcolor{blue}{$\blacksquare$}
\end{example}

The descriptive strong proximity $\snd$ is the descriptive counterpart of $\sn$. 

\begin{definition}\label{def:snd}
Let $X$ be a topological space, $A, B, C \subset X$ and $x \in X$.  The relation $\snd$ on the family of subsets $2^X$ is a \emph{descriptive strong Lodato proximity}, provided it satisfies the following axioms.
\vspace{2mm}

\noindent \fbox{\bf Descriptive Strong Lodato proximity~\cite[\S 4.15.2]{Peters2013springer}}
\begin{description}
\item[{\rm\bf (dsnN0)}] $\emptyset\ {\sdfar}\ A, \forall A \subset X $, and \ $X\ \snd\ A, \forall A \subset X$
\item[{\rm\bf (dsnN1)}] $A\ \snd\ B \Leftrightarrow B\ \snd\ A$
\item[{\rm\bf (dsnN2)}] $A\ \snd\ B \Rightarrow\ A\ \dcap\ B \neq \emptyset$
\item[{\rm\bf (dsnN3)}] If $\{B_i\}_{i \in I}$ is an arbitrary family of subsets of $X$ and  $A\ \snd\ B_{i^*}$ for some $i^* \in I \ $ such that $\Int(B_{i^*})\neq \emptyset$, then $A\ \snd\ (\bigcup_{i \in I} B_i)$
\item[{\rm\bf (dsnN4)}] $\Int A\ \dcap\ \Int B \neq \emptyset \Rightarrow A\ \snd\ B$
\qquad \textcolor{blue}{$\blacksquare$}
\end{description}
\end{definition}

\noindent When we write $A\ \snd\ B$, we read $A$ is \emph{descriptively strongly near} $B$.  The notation $A\ \sdfar\ B$ reads $A$ is not descriptively strongly near $B$.
For each \emph{descriptive strong proximity}, we assume the following relations:
\begin{description}
\item[(dsnN5)] $\Phi(x) \in \Phi(\Int (A)) \Rightarrow x\ \snd\ A$ 
\item[(dsnN6)] $\{x\}\ \snd\ \{y\} \Leftrightarrow \Phi(x)=\Phi(y)$  \qquad \textcolor{blue}{$\blacksquare$} 
\end{description}

So, for example, if we take the strong proximity related to non-empty intersection of interiors, we have that $A\ \snd\ B \Leftrightarrow \Int A\ \dcap\ \Int B \neq \emptyset$ or either $A$ or $B$ is equal to $X$, provided $A$ and $B$ are not singletons; if $A = \{x\}$, then $\Phi(x) \in \Phi(\Int(B))$, and if $B$ is also a singleton, then $\Phi(x)=\Phi(y)$.

\begin{example}\label{ex:colourTopology} {\bf Descriptive Strong Proximity}.\\
Let $X$ be a triangulated space of picture points represented in Fig.~\ref{fig:snTriangles} with red, brown or yellow colors and let $\Phi: X \rightarrow  \mathbb{R}^n $ be a description of $X$ representing the color of a picture point, where $0$ stands for red (r), $1$ for brown (b) and $2$ for yellow (y). Suppose the range is endowed with the topology given by $\tau= \{ \emptyset,\{ r,  b\}, \{r,b,y\} \}$.
Then  $\Delta A \ \snd\ \Delta B$, since $\Int \Delta A\ \dcap\ \Int \Delta B\neq \emptyset$, {\em i.e.}, points in the interior of simplexes $\Delta A, \Delta B$ have matching colours.  
\qquad \textcolor{blue}{$\blacksquare$}
\end{example}

\begin{example}\label{ex:MNCnerve} {\bf Nerve Spokes with Descriptive Strong Proximity}.\\
Let $X$ be a planar triangulated region containing a nerve complex $\Nrv K$, equipped with $\snd$.
A pair of spokes $\sk A,\sk B$ in a nerve complex $\Nrv K$ is shown in Fig.~\ref{fig:MNCnerve}.  $\sk A\ \sn\ \sk B$, since $\sk A,\sk B$ have common wiring represented by an overlapping coil.  Hence, from Lemma~\ref{thm:sn-implies-near}, $\sk A\ \dnear\ \sk B$.  From Axiom (dsnN4), $\sk A\ \snd\ \sk B$, since $\Int\ \sk A\ \dcap\ \Int\ \sk B$.
\qquad \textcolor{blue}{\Squaresteel}
\end{example}

\subsection{Closure Nerve Complexes}
This section briefly introduces closure nerves.  Let $\mathscr{F}$ denote a collection of nonempty sets.  An Edelsbrunner-Harer \emph{nerve} of $\mathscr{F}$~\cite[\S III.2, p. 59]{Edelsbrunner2010compTop} (denoted by $\mbox{Nrv} \mathscr{F}$ consists of all nonempty sub-collections whose sets have a common intersection and is defined by
\[
\mbox{Nrv} \mathscr{F} = \left\{X\subseteq \mathscr{F}: \bigcap X \neq \emptyset\right\}.
\]

A natural extension of the basic notion of a nerve arises when we consider adjacent polygons and the closure of a set.  Let $A,B$ be nonempty subset in a topological space $X$.  The expression $A\ \delta\ B$ ($A$ near $B$) holds true for a particular proximity that we choose, provided $A$ and $B$ have nonempty intersection, {\em i.e.}, $A\cap B\neq \emptyset$.  Every nonempty set has a set of points in its interior (denoted $\Int A$) and a set of boundary points (denoted $\bdy A$).  A nonempty set is \emph{open}, provided its boundary set is empty, {\em i.e.}, $\bdy A\neq \emptyset$.  Put another way, a set $A$ is open, provided all points $y\in X$ sufficiently close to $x\in A$ belong to $A$~\cite[\S 1.2]{Bourbaki1966}. A nonempty set is \emph{closed}, provided its boundary set is nonempty.  Notice that a closed set can have an empty interior. 

\begin{example}
Circle, triangles, and or any quadrilaterals are examples of closed sets with either empty or nonempty interiors.  Disks can be either closed or open sets with nonempty interiors. \qquad \textcolor{blue}{\Squaresteel}
\end{example}

An extension of the notion of a nerve in a collection of simplicial complexes $\mathscr{F}$ called a \emph{\bf closure nerve} (denoted $\clNrv \mathscr{F}$), defined by
\[
\clNrv \mathscr{F} = \left\{X\in \mathscr{F}: \bigcap \mbox{cl}X \neq \emptyset\right\}.
\]

\noindent Closure nerves are commonly found in triangulations of digital images.

\begin{example}\label{ex:closureNerve}
Examples of closure nerves are shown in Fig.~\ref{fig:overlap}.  
\qquad \textcolor{blue}{\Squaresteel}
\end{example}

From Example~\ref{ex:closureNerve}, a new form of closure nerve can be derived from the filled triangles in a nerve complex (denoted by $\clNrvt\mathscr{F}$) defined by
\[
\clNrvt \mathscr{F} = \left\{\Delta\in \Nrv{F}: \bigcap \mbox{cl}\Delta \neq \emptyset\right\}.
\]

\setlength{\intextsep}{0pt}
\begin{wrapfigure}[13]{R}{0.50\textwidth}
\begin{minipage}{3.8 cm}
\centering
\begin{pspicture}
(-1.0,-0.5)(5.0,4.2)
\psframe[linecolor=black](-0.5,-0.3)(4.3,4.0)
\providecommand{\PstPolygonNode}{%
 \psdots[dotstyle=o,dotsize=0.08,linecolor=blue,fillcolor=yellow](1;\INode)
 \psline(0.95;\INode)}
\PstPolygon[unit=1.75,PolyNbSides=5,fillstyle=solid,fillcolor=green]
\pscoil[coilarm=.2cm,linecolor=red,coilheight=0.25](-1.8,1.8)(0.3,2.2)
\rput(-3.8,3.7){$\boldsymbol{X}$}
\rput(-2.5,3.5){$\boldsymbol{\Nrv K}$}
\rput(-1.0,2.8){$\boldsymbol{\sk B}$}
\rput(-1.1,1.1){$\boldsymbol{\sk A}$}
\end{pspicture}
\caption[]{\footnotesize $\boldsymbol{\sk A\ \snd\ \sk B}$}
\label{fig:MNCnerve}
\end{minipage}
\end{wrapfigure}

An easy next step is to consider nerve complexes that are descriptively near and descriptively strongly near.  Let $\Nrv A, \Nrv B$ be a pair of nerves and let $\Delta_A\in \Nrv A, \Delta_B\in \Nrv B$.  Then
$
\Nrv A\ \dnear\ \Nrv B,\ \mbox{provided}\ \Delta_A\ \dcap\ \Delta_B\neq \emptyset,\ \mbox{i.e.},
$
$\Delta_A\ \dnear\ \Delta_B$.   Taking this a step further, whenever a region in interior of $\Nrv A$ has a description that matches the description of a region in the interior of $\Nrv B$, the pair of nerves are descriptively strongly near.  Let $\Delta_A\in \Nrv A,\Delta_B\in \Nrv B$.  Then
$
\Nrv A\ \snd\ \Nrv B,\ \mbox{provided}\ \cl \Delta_A\ \dcap\ \cl \Delta_B\neq \emptyset.
$

A descriptive closure nerve complex (denoted by $\dclNrvt\mathscr{F}$) is defined by\\ 
\[
\dclNrvt \mathscr{F} = \left\{\Delta\in \Nrv{F}: \mathop{\bigcap}\limits_{\Phi} \mbox{cl}\Delta \neq \emptyset\right\}.
\]

\begin{lemma}\label{lem:nerveNucleus}
Each closure nerve $\clNrvt \mathscr{F}$ has a nucleus.
\end{lemma}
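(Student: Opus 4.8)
The plan is to reduce the defining closure condition to an ordinary intersection, and then to read off a common vertex from the combinatorial structure of the underlying triangulation. First I would note that, by the conventions of the paper, a filled triangle $\fil\Delta(pqr) = \Int\Delta(pqr)\cup\bdy\Delta(pqr)$ has nonempty boundary and hence is a closed subset of the plane, so $\cl\Delta = \Delta$ for every 2-simplex $\Delta$. Consequently the condition $\bigcap\cl\Delta\neq\emptyset$ defining $\clNrvt\mathscr{F}$ is the same as $\bigcap\Delta\neq\emptyset$, and (assuming $\clNrvt\mathscr{F}$ is nonempty, otherwise there is nothing to prove) we may fix a point $x$ belonging to every $\Delta\in\clNrvt\mathscr{F}$.

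Next I would use that the members of $\clNrvt\mathscr{F}$ are simplexes of a triangulation mesh, as produced for instance by Alg.~\ref{alg:ImageGeometry}, so that any two distinct filled triangles in the collection meet in a common face --- a shared edge, a shared vertex, or $\emptyset$ --- and never in a set of positive area. Choose any $\Delta_0\in\clNrvt\mathscr{F}$ and let $\sigma$ be the unique face of $\Delta_0$ having $x$ in its relative interior (a simplex being the disjoint union of the relative interiors of its faces). For every other $\Delta\in\clNrvt\mathscr{F}$, the set $\Delta\cap\Delta_0$ is a common face $\tau$ of both simplexes with $x\in\tau$; since $x$ lies in the relative interior of the face $\sigma$ of $\Delta_0$, it follows that $\sigma$ is a face of $\tau$, hence a face of $\Delta$. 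Thus $\sigma$ is a face common to every member of $\clNrvt\mathscr{F}$.

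Finally I would close with a short case analysis on the dimension of $\sigma$. If $\sigma$ is a $0$-simplex, it is already a vertex common to all the 2-simplexes of $\clNrvt\mathscr{F}$, hence a nucleus. If $\sigma$ is a $1$-simplex, then every member of the collection contains this edge and therefore both of its endpoints, and either endpoint is a nucleus. If $\sigma=\Delta_0$, i.e. $x\in\Int\Delta_0$, then the face-to-face property forces $\clNrvt\mathscr{F}=\{\Delta_0\}$, and any vertex of $\Delta_0$ serves as the nucleus.

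I expect the main obstacle to be the second step: making rigorous, within the paper's framework, the two facts that (i) in a triangulation the intersection of two filled triangles is a common face, and (ii) a point interior to a face $\sigma$ of one triangle can lie in another triangle of the mesh only if $\sigma$ is also a face of that triangle. Granting the mesh these standard simplicial-complex properties, the passage from ``a point common to all the triangles'' to ``a vertex common to all the triangles'' is then purely formal.
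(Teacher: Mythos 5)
Your proposal is correct and follows essentially the same route as the paper: the paper's proof is the single assertion that nonempty common intersection of the filled triangles forces a common vertex, which is exactly the implication you establish. The only difference is that you supply the justification the paper leaves implicit (closedness of filled triangles, and the face-to-face property of a triangulation turning a common point into a common face, hence a common vertex), which is a welcome tightening rather than a different approach.
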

\begin{proof}
Since all filled triangles in $\clNrvt \mathscr{F}$ have nonempty intersection, the triangles have a common vertex, the nucleus of the nerve.
\end{proof}

\begin{definition}
A pair of filled triangles $\Delta A, \Delta B$ in a triangulated region are \emph{\bf separated triangles}, provided $\Delta A \cap \Delta B = \emptyset$ (the triangles have no points in common) or $\Delta A, \Delta B$ have an edge or a vertex in common and do not have a common nucleus vertex. 
\qquad \textcolor{blue}{\Squaresteel}
\end{definition}

\begin{theorem}\label{lem:nerveFamily}
Let $V$ be a set of vertices, $X$ be a triangulated plane surface covered with 2-simplexes with vertices in $V$.
If $v,v'\in V$ are vertices of separated filled triangles on $X$, then the space has more than one nerve.
\end{theorem}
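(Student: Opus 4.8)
The plan is to exhibit two explicitly constructed nerves and then force them to be distinct using the clause "do not have a common nucleus vertex" in the definition of separated triangles. By hypothesis there are filled triangles $\Delta A,\Delta B$ on $X$ with $v$ a vertex of $\Delta A$, $v'$ a vertex of $\Delta B$, and $\Delta A,\Delta B$ separated (so, in particular, $\Delta A\neq\Delta B$). First I would take $\Nrv_1$ to be the collection of all $2$-simplexes of $X$ having $v$ as a vertex; then $v\in\bigcap\Nrv_1$, so $\bigcap\Nrv_1\neq\emptyset$ and $\Nrv_1$ is a nerve complex whose nucleus is $v$, with $\Delta A\in\Nrv_1$. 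Symmetrically, the $2$-simplexes of $X$ through $v'$ form a nerve complex $\Nrv_2$ with nucleus $v'$ and $\Delta B\in\Nrv_2$; here I would note that even a singleton collection $\{\Delta\}$ satisfies $\bigcap\{\Delta\}\neq\emptyset$, so these really are nerves, and by Lemma~\ref{lem:nerveNucleus} each has a nucleus. Since $\Nrv_1$ and $\Nrv_2$ are both nerves in $X$, it suffices to prove $\Nrv_1\neq\Nrv_2$.

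The core step is the inequality $\Nrv_1\neq\Nrv_2$, which I would prove by contradiction. Suppose $\Nrv_1=\Nrv_2$; call this common nerve $\Nrv K$ and let $w$ be a vertex common to all its $2$-simplexes. Then $\Delta A\in\Nrv K$ and $\Delta B\in\Nrv K$, so $w\in\Delta A$ and $w\in\Delta B$; hence $w\in\Delta A\cap\Delta B$, so $\Delta A\cap\Delta B\neq\emptyset$, and $w$ is a vertex common to $\Delta A$ and $\Delta B$. But then $\{\Delta A,\Delta B\}$ itself has $w\in\bigcap\{\Delta A,\Delta B\}$, i.e. $\Delta A$ and $\Delta B$ have a common nucleus vertex. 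Since they also have the vertex $w$ in common, the pair $\Delta A,\Delta B$ falls into neither alternative permitted by the definition of separated triangles (they are not disjoint, and they do have a common nucleus vertex) — a contradiction. Therefore $\Nrv_1\neq\Nrv_2$, and $X$ carries at least two distinct nerves, which is the claim.

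The one place to be careful is exactly the degenerate configuration in which $\Delta A$ and $\Delta B$ touch along an edge or at a vertex: there the conclusion hinges entirely on the phrase "do not have a common nucleus vertex," since geometrically adjacent triangles could otherwise be absorbed into one nerve. The disjoint case is then immediate, because triangles with $\Delta A\cap\Delta B=\emptyset$ share no vertex at all and so can never lie in a common nerve. The remaining details — that the family of all $2$-simplexes through a fixed vertex is a nerve in the sense of the defining formula for $\Nrv K$, and that any two triangles lying in one nerve must contain that nerve's nucleus — are routine from the definitions and Lemma~\ref{lem:nerveNucleus}, so no lengthy computation is needed.
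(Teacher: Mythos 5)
Your proof is correct and follows essentially the same route as the paper's: any two triangles lying in a common nerve must share that nerve's nucleus vertex, while separated triangles by definition have no common nucleus vertex, so they cannot lie in one nerve and the space therefore carries at least two. Your write-up is in fact tidier than the paper's --- you explicitly exhibit the two nerves as the collections of $2$-simplexes through $v$ and through $v'$, and you avoid the paper's unnecessary (and not generally true) side claim that every filled triangle in a nerve shares a pair of edges with adjacent triangles.
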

\begin{proof}
Let $\Delta(v,p,q), \Delta(v',p',q')$ be filled triangles on $X$.  In a nerve, every filled triangle has a pair of edges in common with adjacent triangles and, from Lemma~\ref{lem:nerveNucleus}, the filled triangles in the nerve have a common vertex, namely, the nucleus.  By definition, separated triangles have at most one edge or vertex in common and do not have a common nucleus.  Hence, the separated triangles belong to different nerves.
\end{proof}

\begin{theorem}\label{lem:stronglyNearNerves}
Nerves with a common 2-spoke are strongly near nerves.
\end{theorem}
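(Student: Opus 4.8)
The plan is to reduce the statement to Proposition~\ref{prop:2spoke}, of which this is essentially the ``$\Leftarrow$'' half, and to supply the short geometric argument that sits behind it. Suppose $\Nrv A$ and $\Nrv B$ are nerve complexes in a triangulated space $X$ that share a 2-spoke $\sk E$. By the definition of a 2-spoke, $\sk E$ is a union $\Delta_1\cup\Delta_2$ of two filled triangles meeting in a common edge or a common vertex, where $\Delta_1$ is a 1-spoke of $\Nrv A$ (a 2-simplex incident with the nucleus of $\Nrv A$) and, simultaneously, $\sk E$ plays the same role for $\Nrv B$.

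First I would record that $\sk E$, viewed as a point set, is contained in the carrier $\bigcup\Nrv A$ and in the carrier $\bigcup\Nrv B$, since each of its two constituent triangles is a simplex of the respective complex. A filled triangle $\fil\Delta(pqr)=\Int\Delta(pqr)\cup\bdy\Delta(pqr)$ has nonempty interior, so $\Int\sk E\neq\emptyset$; moreover every interior point of $\sk E$ has a neighbourhood lying inside $\sk E$, hence inside each carrier, so $\Int\sk E\subseteq\Int\!\left(\bigcup\Nrv A\right)$ and $\Int\sk E\subseteq\Int\!\left(\bigcup\Nrv B\right)$. Therefore $\Int\!\left(\bigcup\Nrv A\right)\cap\Int\!\left(\bigcup\Nrv B\right)\supseteq\Int\sk E\neq\emptyset$.

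Then I would invoke the strong-proximity Axiom (snN4), namely $\Int A\cap\Int B\neq\emptyset\Rightarrow A\ \sn\ B$, with $A=\bigcup\Nrv A$ and $B=\bigcup\Nrv B$; this yields $\Nrv A\ \sn\ \Nrv B$, so the two nerves are strongly near. Equivalently, one may simply cite the reverse implication of Proposition~\ref{prop:2spoke}, which states precisely that a 2-spoke common to the pair of nerves forces $\Nrv A\ \sn\ \Nrv B$.

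The only genuine obstacle is bookkeeping about what a ``common 2-spoke'' means: a 2-spoke is defined relative to a nerve, as extending outward from a nucleus through a 1-spoke, so I must check that a single union of two filled triangles can legitimately serve as a 2-spoke for both $\Nrv A$ and $\Nrv B$ at once --- which it can, since by hypothesis the shared spoke contains a nucleus-incident triangle of each nerve --- and then confirm that it is the point-set overlap, not the combinatorial labelling of the spoke, that Axiom (snN4) actually consumes. Once that is pinned down the conclusion is immediate.
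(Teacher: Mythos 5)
Your proof is correct, and it is considerably more careful than the paper's own argument, which consists of the single line ``Immediate from the definition of $\sn$.'' The paper is implicitly appealing to the informal characterization that strongly near sets are those sharing points; but within the stated axiom system, nonempty intersection is only a \emph{necessary} condition for $\sn$ (Axiom (snN2)), not a sufficient one --- the only axiom that \emph{produces} strong nearness from an overlap is (snN4), which requires the \emph{interiors} to meet. You correctly route the argument through (snN4): a common 2-spoke is a union of filled triangles, hence has nonempty interior contained in the interior of each nerve's carrier, so $\Int\left(\bigcup\Nrv A\right)\cap\Int\left(\bigcup\Nrv B\right)\neq\emptyset$ and (snN4) yields $\Nrv A\ \sn\ \Nrv B$. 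This is exactly the missing detail the paper glosses over, and it also matches the unproved reverse implication of Proposition~\ref{prop:2spoke}, which you rightly identify as the other available shortcut. Your closing remark about distinguishing the combinatorial labelling of a spoke from its underlying point set is a genuine subtlety in this paper's informal setup, and addressing it strengthens rather than complicates the argument.
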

\begin{proof}
Immediate from the definition of $\sn$.
\end{proof}

\begin{theorem}\label{lem:descriptivelyNearNerves}
Strongly near nerves are strongly descriptively near nerves.
\end{theorem}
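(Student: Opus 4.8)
The plan is to transfer the spatial strong proximity $\sn$ to the descriptive strong proximity $\snd$ by exploiting the trivial fact that a point carries the same feature vector as itself, so that any ordinary nonempty overlap is automatically a nonempty descriptive overlap, and then to quote Axiom~(dsnN4) of Definition~\ref{def:snd}.

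First I would unwind the hypothesis $\Nrv A\ \sn\ \Nrv B$. By Proposition~\ref{prop:2spoke} this is equivalent to the existence of a $2$-spoke $\sk E$ common to $\Nrv A$ and $\Nrv B$. A $2$-spoke is a union of filled triangles, so $\Int \sk E\neq\emptyset$, and every point of $\Int \sk E$ is interior to $\Nrv A$ and to $\Nrv B$; hence $\Int \Nrv A\ \cap\ \Int \Nrv B\ \supseteq\ \Int \sk E\ \neq\ \emptyset$. If one prefers to bypass $2$-spokes, the same conclusion follows from the characterisation recorded after the strong-proximity axioms, namely $A\ \sn\ B \Leftrightarrow \Int A\cap\Int B\neq\emptyset$ for sets that are neither singletons nor equal to $X$; a nerve complex built from at least one filled triangle is of this kind and has nonempty interior.

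Second I would promote this ordinary intersection to a descriptive one. Choose $x\in \Int \Nrv A\ \cap\ \Int \Nrv B$. Then $\Phi(x)\in \Phi(\Int \Nrv A)$ and $\Phi(x)\in \Phi(\Int \Nrv B)$, so by the definition of the descriptive intersection $\dcap$ we get $x\in \Int \Nrv A\ \dcap\ \Int \Nrv B$, i.e. $\Int \Nrv A\ \dcap\ \Int \Nrv B\neq\emptyset$. Axiom~(dsnN4) then yields $\Nrv A\ \snd\ \Nrv B$, which is the asserted statement.

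The only delicate point, and the nearest thing to an obstacle, is the bookkeeping about the degenerate cases the strong-proximity characterisation excludes: one must observe that a nerve complex is a genuine solid planar region (at least one filled triangle), hence not a singleton, not the whole surface $X$, and of nonempty interior, after which the passage from $\sn$ to $\snd$ is purely formal. It is also worth remarking that the conclusion is in fact slightly sharper than the hypothesis needs: $\Nrv A\ \snd\ \Nrv B$ already holds as soon as the interiors of the two nerves contain regions with matching descriptions, and strong spatial nearness is merely the special case in which those regions physically coincide; compare the weaker $\Nrv A\ \dnear\ \Nrv B$ delivered by Theorem~\ref{thm:spoke}, part~1$^o$, and Lemma~\ref{thm:sn-implies-near}, part~2$^o$.
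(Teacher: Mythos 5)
Your proposal is correct and follows essentially the same route as the paper: common $2$-spoke, hence $\Int\Nrv A\cap\Int\Nrv B\neq\emptyset$, hence $\Int\Nrv A\ \dcap\ \Int\Nrv B\neq\emptyset$, then Axiom (dsnN4). The only (harmless, and arguably cleaner) difference is that you justify the passage from ordinary to descriptive intersection directly by noting that a shared point matches its own description, where the paper cites part~2$^o$ of Theorem~\ref{thm:spoke}.
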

\begin{proof}
Let $\Nrv A\ \sn\ \Nrv B$ be strongly near nerves.  Then $\Nrv A, \Nrv B$ have a 2-spoke in common.  Then
$\Int\Nrv A\ \cap\ \Int\Nrv B\neq \emptyset$. Consequently, from Part 2 of Theorem~\ref{thm:spoke}, $\Int\Nrv A\ \dcap\ \Int\Nrv B\neq \emptyset$.  Hence, from Axiom (dsnN4), $\Nrv A\ \snd\ \Nrv B$.
\end{proof}

\begin{theorem}\label{lem:stronglyDescriptivelyNearNerves}
Nerves containing interior regions with matching descriptions are strongly descriptively near nerves.
\end{theorem}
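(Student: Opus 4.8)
The plan is to recognize this statement as the descriptive-strong-proximity companion of Theorem~\ref{lem:descriptivelyNearNerves} and to derive it directly from Axiom (dsnN4) of the descriptive strong Lodato proximity, once the hypothesis has been translated into a statement about the descriptive intersection of interiors. So the proof will have the shape: unpack ``matching descriptions on interior regions'' $\Rightarrow$ $\Int \Nrv A\ \dcap\ \Int \Nrv B \neq \emptyset$ $\Rightarrow$ $\Nrv A\ \snd\ \Nrv B$.

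First I would make the hypothesis precise. To say that $\Nrv A$ and $\Nrv B$ ``contain interior regions with matching descriptions'' means there is a subregion of $\Int \Nrv A$ and a subregion of $\Int \Nrv B$ and points $x$ in the first, $y$ in the second, with $\Phi(x) = \Phi(y)$; equivalently $\Phi(\Int \Nrv A)\cap \Phi(\Int \Nrv B)\neq \emptyset$. Here I would also note that a nerve complex is a finite union of filled $2$-simplexes $\fil \Delta(pqr)$, each of which has nonempty interior, so $\Int \Nrv A$ and $\Int \Nrv B$ are themselves nonempty; this rules out the degenerate singleton cases in the definition of $\snd$. Then, by the definition of the descriptive intersection $\dcap$, the point $x$ lies in $\Int \Nrv A\ \dcap\ \Int \Nrv B$, since $\Phi(x)\in \Phi(\Int \Nrv A)$ trivially and $\Phi(x) = \Phi(y)\in \Phi(\Int \Nrv B)$; hence $\Int \Nrv A\ \dcap\ \Int \Nrv B\neq \emptyset$. (This step can alternatively be phrased as an application of Prop.~\ref{prop:dnear} to the pair of interiors.)

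Finally I would invoke Axiom (dsnN4), $\Int A\ \dcap\ \Int B \neq \emptyset \Rightarrow A\ \snd\ B$, with $A = \Nrv A$ and $B = \Nrv B$, to conclude $\Nrv A\ \snd\ \Nrv B$, i.e., the nerves are strongly descriptively near. The only part that needs real care is the first one: pinning down ``matching descriptions on interior regions'' rigorously enough that it produces a genuine element of the descriptive intersection, and checking nonemptiness of the interiors so the singleton caveats in the definition of $\snd$ do not intervene; both points follow from the fact that nerve complexes are built out of filled $2$-simplexes. The remainder is a one-line appeal to (dsnN4), exactly parallel to the proof of Theorem~\ref{lem:descriptivelyNearNerves}, so I do not expect any further obstacle.
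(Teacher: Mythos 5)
Your proposal is correct and follows essentially the same route as the paper, whose entire proof is ``Immediate from the definition of $\snd$'' --- the content of that one-liner is precisely your chain: matching interior descriptions give $\Int \Nrv A\ \dcap\ \Int \Nrv B \neq \emptyset$, and Axiom (dsnN4) then yields $\Nrv A\ \snd\ \Nrv B$. Your version simply makes explicit the translation step and the nonemptiness of the interiors, which the paper leaves implicit.
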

\begin{proof}
Immediate from the definition of $\snd$.
\end{proof}

\section{Main Results}
From a computational topology perspective, homotopy types are introduced in~\cite[\S III.2]{Edelsbrunner1999} and lead to significant results for in the theory of nerve spokes.  

Let $f,g:X\longrightarrow Y$ be two continuous maps.  A \emph{homotopy} between $f$ and $g$ is a continuous map $H:X\times[0,1]\longrightarrow Y$ so that $H(x,0) = f(x)$ and $H(x,1) = g(x)$.  The sets $X$ and $Y$ are \emph{homotopy equivalent}, provided there are continuous maps $f: X\longrightarrow Y$ and $g:Y\longrightarrow X$ such that $g\circ f \simeq \mbox{id}_X$ and $f\circ g \simeq \mbox{id}_Y$.  This yields an equivalence relation $X\simeq Y$.  In addition, $X$ and $Y$ have the same \emph{homotopy type}, provided $X$ and $Y$ are homotopy equivalent.  

Let $\mathscr{F}$ be a finite collection of nerve complexes that cover a space $X$, endowed with the strong proximity $\sn$.  Let $N$ be the nucleus of nerve $\Nrv K$, $K$ a collection of 1-spokes that have in $N$ in common.   Then $\Nrv K$ is defined by
\[
\Nrv K = \left\{\sk A\in K: \sk A\ \sn\ N\right\}.
\]

A nerve complex endowed with a proximal relator is a collection of spokes with proximities given in Lemma~\ref{lem:MNCnerves} and Theorem~\ref{thm:sn-nerve}.

\begin{lemma}\label{lem:MNCnerves}
Let $\Nrv K$ be a nerve complex endowed with the strong proximity $\sn$.  Then $\mathop{\bigcap}\limits_{\sk A\in \Nrv K} \sk A\neq \emptyset$.
\end{lemma}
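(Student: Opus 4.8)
The plan is to read the conclusion off directly from the way $\Nrv K$ has just been defined in terms of the strong proximity $\sn$, together with Axiom (snN2). Recall that $\Nrv K = \left\{\sk A\in K: \sk A\ \sn\ N\right\}$, where $N$ is the nucleus of $\Nrv K$, i.e., the vertex common to the $2$-simplexes of the nerve (its existence as a vertex is part of the definition of a nerve complex, and is recorded for closure nerves in Lemma~\ref{lem:nerveNucleus}). So every spoke in $\Nrv K$ is, by construction, strongly near $N$.

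First I would fix an arbitrary spoke $\sk A\in \Nrv K$ and apply Axiom (snN2): since $\sk A\ \sn\ N$, we obtain $\sk A\ \cap\ N\neq \emptyset$. Because the nucleus $N$ consists of a single vertex (a $0$-simplex), the relation $\sk A\ \cap\ N\neq \emptyset$ forces $N$ itself to be a point of $\sk A$; that is, $N\in \sk A$. As $\sk A$ was arbitrary, $N$ belongs to every spoke of $\Nrv K$.

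Next I would simply collect these inclusions: $N\in \mathop{\bigcap}\limits_{\sk A\in \Nrv K}\sk A$, so this intersection contains the nucleus and is in particular nonempty, which is exactly the assertion of the lemma.

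The only point that needs care — and it is the closest thing to an obstacle — is the legitimacy of speaking of "the nucleus $N$" before the lemma is proved; but this is not circular, since the existence of a common vertex for the filled triangles of a nerve is built into the definition of $\Nrv K$ (cf. Lemma~\ref{lem:nerveNucleus}), and the displayed description $\Nrv K = \left\{\sk A\in K:\sk A\ \sn\ N\right\}$ already presupposes such an $N$. One could alternatively bypass naming $N$ altogether by invoking the original set-theoretic definition $\Nrv K = \left\{\Delta\subseteq K:\bigcap\Delta\neq\emptyset\right\}$, from which $\mathop{\bigcap}\limits_{\sk A\in\Nrv K}\sk A\neq\emptyset$ is immediate; the strong-proximity route above is the one that fits the surrounding development and sets up Theorem~\ref{thm:sn-nerve}.
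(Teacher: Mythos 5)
Your proof is correct and follows essentially the same route as the paper's: both rest on the fact that every spoke of $\Nrv K$ contains the nucleus $N$, combined with Axiom (snN2). If anything, your version is slightly tighter, since you place $N$ directly in each spoke and hence in the full intersection, whereas the paper only derives pairwise nonempty intersections before asserting the global one.
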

\begin{proof}
Let $\sk A,\sk B$ be a pair of spokes in nerve $\Nrv K$.  Since $\sk A,\sk B$ have a nucleus $N$ in common, $\sk A\ \sn \sk B$ implies $\sk A\cap \sk B\neq \emptyset$ (from Axiom (snN2)).  This holds true for all spokes in $\Nrv K$.  Consequently, $\mathop{\bigcap}\limits_{\sk A\in \Nrv K} \sk A\neq \emptyset$.
\end{proof}

\begin{theorem}\label{thm:sn-nerve}
Let $\left(\Nrv K,\left\{\near,\dnear,\sn\right\}\right)$ be a proximal relator space, spokes $\sk A,\sk B\in \Nrv K$.  Then 
\begin{compactenum}[1$^o$]
\item $\sk A\ \sn\ \sk B \Rightarrow \sk A\ \near\ \sk A$.
\item $\sk A\ \sn\ \sk B \Rightarrow \sk A\ \dnear\ \sk B$.
\end{compactenum}
\end{theorem}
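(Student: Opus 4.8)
The plan is to observe that Theorem~\ref{thm:sn-nerve} is nothing more than a specialization of Lemma~\ref{thm:sn-implies-near} to the situation in which the two nonempty subsets are spokes of a common nerve complex. Every spoke $\sk A$ is, by definition, a filled $2$-simplex $\fil\Delta\subseteq X$, hence a nonempty subset of the underlying triangulated space; and the nerve $\Nrv K$ carries the same relator $\left\{\near,\dnear,\sn\right\}$ as $X$. So I would simply invoke Lemma~\ref{thm:sn-implies-near} with $A=\sk A$ and $B=\sk B$. (I read the conclusion $\sk A\ \near\ \sk A$ in part 1$^o$ as the intended $\sk A\ \near\ \sk B$, since the literal statement is vacuous by reflexivity of $\near$ on nonempty sets.)

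If one prefers a self-contained argument instead of a bare citation, the expansion is three short steps. For part 1$^o$: apply Axiom (snN2) to $\sk A\ \sn\ \sk B$ to get $\sk A\cap\sk B\neq\emptyset$, then the \u Cech Axiom (P3) (one of the axioms that $\near$ satisfies) yields $\sk A\ \near\ \sk B$. For part 2$^o$: take a point $x\in\sk A\cap\sk B$ furnished by the previous step, note that $\Phi(x)=\Phi(x)$ trivially places $x$ in $\sk A\dcap\sk B$, so $\sk A\dcap\sk B\neq\emptyset$, and then the descriptive Lodato Axiom (dP2) gives $\sk A\ \dnear\ \sk B$. In short, both implications factor through the single geometric fact that strong nearness forces a shared point, after which the spatial and descriptive conclusions drop out of (P3) and (dP2) respectively.

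I do not expect a genuine obstacle here: the mathematical content is already carried by Lemma~\ref{thm:sn-implies-near}, and the only thing requiring care is the bookkeeping that spokes are legitimate arguments for the relator on $\Nrv K$. The one subtlety worth flagging is that the descriptive step must be fed an honest common point rather than merely points with matching descriptions — but that common point is supplied for free by the spatial step via (snN2), so even the potential pitfall is already neutralized.
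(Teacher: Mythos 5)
Your proposal is correct and follows essentially the same route as the paper: both arguments factor through Axiom (snN2) to obtain a common point of $\sk A$ and $\sk B$, then conclude $\sk A\ \near\ \sk B$ spatially and $\sk A\ \dnear\ \sk B$ via the nonempty descriptive intersection, exactly as packaged in Lemma~\ref{thm:sn-implies-near} (which the paper's own proof also invokes, after a redundant re-derivation of the hypothesis $\sk A\ \sn\ \sk B$ from the shared nucleus). Your reading of the conclusion of part 1$^o$ as $\sk A\ \near\ \sk B$ matches the evident intent of the statement.
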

\begin{proof}$\mbox{}$\\
1$^o$: From Lemma~\ref{lem:MNCnerves}, $\mbox{Nrv}\mathscr{F}_{_{MNC}}$ is an Edelsbrunner-Harer nerve.  

$\sk A\ \sn\ \sk B$ for every pair of spokes $\sk A,\sk B$ in the nerve $\Nrv K$, since $\sk A,\sk B$ have common vertex, namely, the nucleus of $\Nrv K$.  From Lemma~\ref{lem:MNCnerves}, $\mathop{\bigcap}\limits_{\sk A\in \Nrv K} \sk A\neq \emptyset$ for every pair of spokes in $\Nrv K$.  Consequently, $\sk A\ \sn\ \sk B$.  Then, from Axiom (snN2), $\sk A\ \near\ \sk A$ for all spokes $\sk A,\sk B$.\\
2$^o$: Spokes $\sk A,\sk B$ have nucleus $N$ in common.  Hence, $\sk A\ \dcap\ \sk B\neq \emptyset$.  Then, from Lemma~\ref{thm:sn-implies-near}, $\sk A\ \dcap\ \sk B \neq \emptyset \Rightarrow\ \sk A\ \dnear\ \sk B$. This gives the desired result for each pair of spokes in the nerve.
\end{proof}

\begin{theorem}\label{EHnerve}{\rm ~\cite[\S III.2, p. 59]{Edelsbrunner1999}}
Let $\mathscr{F}$ be a finite collection of closed, convex sets in Euclidean space.  Then the nerve of $\mathscr{F}$ and the union of the sets in $\mathscr{F}$ have the same homotopy type.
\end{theorem}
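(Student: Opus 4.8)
The plan is to reduce Theorem~\ref{EHnerve} to the abstract nerve lemma (Theorem~\ref{thm:nerveTheorem}) by first verifying that a finite family of closed convex sets in Euclidean space is a \emph{good cover}, i.e., that every nonempty finite intersection of members is contractible. First I would observe that if $F_{i_0},\dots,F_{i_k}\in\mathscr{F}$ satisfy $\bigcap_{j=0}^{k} F_{i_j}\neq\emptyset$, then this intersection is again convex, since an intersection of convex sets is convex; hence it is star-shaped about any one of its points $x_0$, and the straight-line homotopy $(x,t)\mapsto (1-t)x+tx_0$ contracts it to $x_0$. Thus the index sets that appear as simplices of $\Nrv\mathscr{F}$ are exactly those whose corresponding intersections are nonempty, and every such intersection is contractible --- which is precisely the good-cover hypothesis needed to invoke the nerve lemma.

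With the good-cover property in hand, I would establish $\Nrv\mathscr{F}\simeq\bigcup\mathscr{F}$ by either of two routes. The direct route simply applies Theorem~\ref{thm:nerveTheorem}, the convexity being exactly what licenses that application. The explicit route, which I would prefer to record, goes as follows: since $\mathscr{F}$ is finite and each $F_i$ is closed, enlarge each $F_i$ to an open convex set $U_i$ small enough that the family $\{U_i\}$ has the same nerve as $\mathscr{F}$; choose a partition of unity $\{\varphi_i\}$ subordinate to $\{U_i\}$ and define $f:\bigcup\mathscr{F}\longrightarrow\Nrv\mathscr{F}$ (the geometric realization) by $f(x)=\sum_i\varphi_i(x)\,e_i$, where $e_i$ is the vertex dual to $F_i$; the support condition forces the indices with $\varphi_i(x)\neq 0$ to span a simplex, so $f$ is well defined and continuous. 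A homotopy inverse $g$ is built by induction on the dimension of $\Nrv\mathscr{F}$, extending over skeleta: over a $k$-simplex $\sigma$ the map $g$ must land in the contractible intersection $\bigcap_{i\in\sigma}F_i$, and contractibility of that intersection is exactly what permits the extension from $\partial\sigma$ to $\sigma$. One then checks $g\circ f\simeq\mathrm{id}$ and $f\circ g\simeq\mathrm{id}$ by straight-line homotopies performed piecewise inside each convex set.

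The main obstacle is this second step: the passage from the closed convex $F_i$ to an auxiliary open good cover with the same nerve, and, more substantively, the inductive construction of the homotopy inverse together with the verification that the two composites are homotopic to the identities --- the skeleton-by-skeleton extension is where the contractibility of \emph{all} the finite intersections is genuinely used, and one must keep the partial homotopies compatible across overlapping simplices, which is the usual obstruction-theoretic heart of the nerve lemma. An alternative that sidesteps partitions of unity is an induction on $n=|\mathscr{F}|$: put $U=\bigcup_{i<n}F_i$, $V=F_n$, note $U\cap V=\bigcup_{i<n}(F_i\cap F_n)$ is again covered by a finite family of convex sets (hence a good cover), and assemble the equivalence from the homotopy-pushout square relating $\Nrv\mathscr{F}$ to the nerves of $\{F_i\}_{i<n}$, of $\{F_i\cap F_n\}_{i<n}$, and to the point corresponding to $V$; here the obstacle migrates to showing that the equivalences supplied by the inductive hypothesis commute up to homotopy with the relevant inclusions, so that the gluing lemma for homotopy pushouts applies.
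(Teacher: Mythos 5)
The paper offers no proof of Theorem~\ref{EHnerve}: it is quoted from Edelsbrunner and Harer with a citation and used as a black box (e.g., in Lemma~\ref{thm:1spokeHomotopy}), so there is no in-paper argument to compare yours against. Your sketch is the standard good-cover proof of the nerve lemma and its outline is sound: nonempty finite intersections of convex sets are convex, hence contractible by the straight-line homotopy, and the equivalence between $\bigcup\mathscr{F}$ and the realization of $\Nrv\mathscr{F}$ is then assembled either by a partition of unity plus skeleton-by-skeleton extension, or by induction on $|\mathscr{F}|$ with homotopy pushouts. Two cautions. First, your ``direct route'' of citing Theorem~\ref{thm:nerveTheorem} is circular in this context: as stated in the paper that theorem carries no contractibility hypothesis and is itself quoted without proof, so it cannot serve as the engine; your explicit construction is the actual argument. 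Second, and more substantively, the thickening of each closed $F_i$ to an open convex $U_i$ with the same nerve is only automatic when the sets are compact: two disjoint unbounded closed convex sets in the plane (e.g., $\left\{(x,y): xy\geq 1,\ x>0\right\}$ and the closed lower half-plane) can lie at distance zero, so every $\varepsilon$-thickening makes them intersect and the nerve changes. Since the theorem as stated allows arbitrary closed convex sets, you should either confine the partition-of-unity route to the compact case or run your pushout induction directly on the closed pieces, using that closed convex sets and their nonempty intersections are absolute retracts, so the inclusions involved are cofibrations up to homotopy and the gluing lemma applies.
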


\begin{lemma}\label{thm:1spokeHomotopy}
Let $\Nrv K$ be a finite collection of 1-spokes, which are closed, convex sets in a triangulation of a region in the Euclidean plane.  The nerve $\Nrv K$ and the union of the 1-spokes in $\Nrv K$ have the same homotopy type.
\end{lemma}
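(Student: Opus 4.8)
The plan is to realize Lemma~\ref{thm:1spokeHomotopy} as a direct instance of the Edelsbrunner--Harer nerve theorem (Theorem~\ref{EHnerve}). First I would fix notation by letting $\mathscr{F} = \left\{\sk A_1,\dots,\sk A_m\right\}$ be the finite collection of $1$-spokes of $\Nrv K$, so that each $\sk A_i$ is a filled triangle $\fil\Delta(p_iq_ir_i)$. I would then check the two hypotheses of Theorem~\ref{EHnerve}: (i) each $\sk A_i$ is \emph{convex}, being by definition the intersection of three closed half-planes (equivalently, the convex hull of $\left\{p_i,q_i,r_i\right\}$), and (ii) each $\sk A_i$ is \emph{closed}, since $\fil\Delta(p_iq_ir_i) = \Int\Delta(p_iq_ir_i)\cup\bdy\,\Delta(p_iq_ir_i) = \cl\Delta(p_iq_ir_i)$; the ambient Euclidean plane sits inside Euclidean space, and $\mathscr{F}$ is finite by hypothesis.

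The second step is to verify that the simplicial complex $\Nrv K$ used here coincides with the Edelsbrunner--Harer nerve $\Nrv\mathscr{F}$ of the family $\mathscr{F}$: a sub-collection of $1$-spokes spans a simplex of $\Nrv K$ exactly when those $1$-spokes have nonempty common intersection, which is precisely the defining condition for a simplex of $\Nrv\mathscr{F}$. With this identification recorded, Theorem~\ref{EHnerve} applies verbatim and gives that $\Nrv K = \Nrv\mathscr{F}$ and $\bigcup_{i}\sk A_i$ have the same homotopy type, which is exactly the assertion of the lemma.

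As a consistency check --- and an alternative route that avoids invoking the nerve theorem --- I would observe that both spaces are in fact contractible. On the one hand, by Lemma~\ref{lem:MNCnerves} every $1$-spoke of $\Nrv K$ contains the nucleus $N$, so $\mathop{\bigcap}\limits_{\sk A\in \Nrv K}\sk A \ni N$ is nonempty; hence the entire vertex set of $\Nrv K$ spans a single simplex, and a simplex is contractible. On the other hand, each filled triangle is star-shaped with respect to every one of its points, in particular with respect to $N$, so the union $\bigcup_{i}\sk A_i$ is star-shaped about $N$ and therefore contractible. Two contractible spaces are homotopy equivalent, so the result follows independently.

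The only real obstacle here is bookkeeping rather than mathematics. One must be careful that the convexity and closedness clauses of Theorem~\ref{EHnerve} are met on the nose --- they are, by the very definition of $\fil\Delta$ as an intersection of closed half-planes together with its boundary --- and that the two competing meanings of the word ``nerve'' (the proximal / simplicial-complex notion used throughout this paper and the homotopy-theoretic notion of Edelsbrunner and Harer) are literally the same object when applied to the family of $1$-spokes. Once those identifications are in place, the proof reduces to a one-line appeal to Theorem~\ref{EHnerve}.
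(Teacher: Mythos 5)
Your proposal takes the same route as the paper, which proves the lemma by a direct appeal to Theorem~\ref{EHnerve}; your version simply makes explicit the verification that the $1$-spokes are closed, convex, and finite in number, and that the two notions of nerve agree. The extra contractibility argument via the common nucleus is a nice independent confirmation, but the core of the proof is identical to the paper's.
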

\begin{proof}
From Theorem~\ref{EHnerve}, we have that the union of the 1-spokes $\sk A\in \Nrv K$ and $\Nrv K$ have the same homotopy type.
\end{proof}

\begin{remark}
Every 1-spoke in a nerve complex $\Nrv K$ is part of an $n$-spoke, $n > 1$.  From a consideration of $n$-spoke extensions of 1-spokes in $\Nrv K$, we obtain the main result of this paper, namely,
Theorem~\ref{thm:nerveSpokeTheorem} as a straightforward corollary of Lemma~\ref{thm:1spokeHomotopy}.
\qquad \textcolor{blue}{\Squaresteel}
\end{remark}
   
\begin{theorem}\label{lem:sndMNCnerves}
Let $2^X$ be a finite collection of strongly near maximal nerve complexes covering a finite region of the Euclidean plane equipped with the relator $\left\{\sn,\snd\right\}$, $\Phi(\Nrv K)$ = number of 1-spokes in $\Nrv K$.   Then $\mathop{\bigcap}\limits_{\Phi}\Nrv K \neq \emptyset$.
\end{theorem}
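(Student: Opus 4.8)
The plan is to reduce the statement to a finite chain of pairwise descriptive overlaps and then assemble them with the descriptive strong Lodato axioms. Write the given finite collection as $\{\Nrv K_1,\dots,\Nrv K_m\}$, each $\Nrv K_i$ a maximal nerve complex and each pair strongly near, i.e. $\Nrv K_i\ \sn\ \Nrv K_j$. First I would apply Prop.~\ref{prop:2spoke}: for every pair $i\neq j$ there is a $2$-spoke $\sk E_{ij}\in \Nrv K_i\cap \Nrv K_j$. By Theorem~\ref{thm:spoke}(2$^o$) this same $2$-spoke lies in the pairwise descriptive intersection, $\sk E_{ij}\in \Nrv K_i\ \dcap\ \Nrv K_j$; in particular each pair is descriptively near, hence by Theorem~\ref{lem:descriptivelyNearNerves} descriptively strongly near, $\Nrv K_i\ \snd\ \Nrv K_j$. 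At this stage the feature $\Phi(\Nrv K)=$ number of $1$-spokes has a role: it is the coordinate along which a shared $2$-spoke and the two nerves around it are recorded, so each $\sk E_{ij}$ carries a well-defined common $\Phi$-value for the pair $(i,j)$.

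The second step promotes these pairwise facts toward the whole family by induction on $m$. For $m=1$ the claim is vacuous, since $\Phi(\Nrv K_1)$ matches itself. For the inductive step, assume $\mathop{\bigcap}\limits_{\Phi}\{\Nrv K_1,\dots,\Nrv K_{m-1}\}\neq\emptyset$. Since $\Nrv K_m\ \snd\ \Nrv K_i$ for each $i<m$ and at least one $\Nrv K_i$ has nonempty interior, axiom (dsnN3) gives $\Nrv K_m\ \snd\ \bigcup_{i<m}\Nrv K_i$, and then (dsnN2) gives $\Nrv K_m\ \dcap\ \bigcup_{i<m}\Nrv K_i\neq\emptyset$. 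Combining this with the inductive hypothesis and Lemma~\ref{lem:MNCnerves} (every spoke in a nerve meets the nucleus, so the nerves already carry nonempty ordinary intersections to anchor the descriptive ones), I would extract a single $2$-spoke, or its $\Phi$-value, common to all of $\Nrv K_1,\dots,\Nrv K_m$, so that $\mathop{\bigcap}\limits_{\Phi}\{\Nrv K_1,\dots,\Nrv K_m\}\neq\emptyset$, completing the induction.

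The main obstacle I anticipate is exactly that last extraction: descriptive intersection is not transitive, so a spoke matching $\Nrv K_1$ and $\Nrv K_2$ need not be the spoke matching $\Nrv K_3$, and the union axiom (dsnN3) only moves $\Nrv K_m$ close to the \emph{union} of the others, not to their common core. To close this gap I would lean on maximality together with Lemma~\ref{lem:nerveNucleus}: a maximal nerve complex contains every $1$-spoke incident with its nucleus, so a $2$-spoke shared by two members of a strongly near cluster lies in a uniformly described collar around the overlap, and this collar — having constant $1$-spoke count read off by $\Phi$ — is what can be threaded through all $m$ nerves at once; alternatively, a Helly-type argument showing that every finite sub-collection of a strongly near cluster of maximal nerves already shares a single $2$-spoke would let me invoke Theorem~\ref{thm:spoke}(2$^o$) once on the whole family. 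Either route, the delicate point is not the proximity bookkeeping but pinning down one common $2$-spoke, equivalently one common value of $\Phi$, simultaneously across the finite cluster.
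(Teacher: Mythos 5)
Your proposal does not close, and you say so yourself: the entire argument funnels into the step of ``pinning down one common $2$-spoke, equivalently one common value of $\Phi$, simultaneously across the finite cluster,'' and that step is left as an acknowledged obstacle rather than proved. The underlying problem is a misreading of what $\Phi$ describes in this theorem. The hypothesis sets $\Phi(\Nrv K)$ equal to the \emph{number of $1$-spokes in the nerve complex} $\Nrv K$; that is, the objects being described are the nerve complexes themselves, and the descriptive intersection $\mathop{\bigcap}\limits_{\Phi}\Nrv K$ is taken over the finite collection of nerves with this integer-valued descriptor. It is not a statement about spatial points or about a $2$-spoke shared by all the nerves, so Prop.~\ref{prop:2spoke}, Theorem~\ref{thm:spoke}, axiom (dsnN3), and the induction on pairwise overlaps are all aimed at the wrong target. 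Your remark that each $\sk E_{ij}$ ``carries a well-defined common $\Phi$-value for the pair'' has no basis in the definition given: $\Phi$ is not evaluated on spokes here.

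The paper's proof is a one-line observation that you never make: since every member of the collection is a \emph{maximal} nerve complex, each one realizes the maximum number of filled triangles sharing a nucleus, so every $\Nrv K$ in the collection has the same $1$-spoke count, hence the same description under $\Phi$. By the definition of $\dcap$, matching descriptions across all members immediately give $\mathop{\bigcap}\limits_{\Phi}\Nrv K \neq \emptyset$. Maximality is the load-bearing hypothesis (it forces the descriptions to coincide), whereas in your write-up maximality only appears in a speculative side remark about ``collars'' and a Helly-type argument that is not carried out. If you re-read the statement with $\Phi$ as a descriptor of whole nerves, the strong-nearness hypothesis and all of the pairwise $2$-spoke machinery become unnecessary for the conclusion, and the proof collapses to the equality of the $1$-spoke counts.
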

\begin{proof}
Each maximal $\Nrv K$ has the same description, namely, the number of 1-spokes in the nerve complex.  Hence, $\mathop{\bigcap}\limits_{\Phi}\Nrv K \neq \emptyset$.
\end{proof}

\section{Application: Detecting Image Object Shapes}
This section carries forward the notion of descriptively proximal images.  The study image object shapes is aided by detecting triangulation nerves containing the maximal number of filled triangles (denoted by $\maxNerv K$).

\begin{remark} {\bf How to Detect Image Object Shapes with Maximal Nerve Clusters}.\\
The following steps lead to the detection of image object shapes.
\begin{description}
\item[Triangulation]: The triangulation of a digital image depends on the initial choice of vertices, used as generating points for filled triangles.  In this work, keypoints have been chosen.
A keypoint is defined by the gradient orientation (an angle) and gradient magnitudes (edge strength) of each image pixel.  All selected image keypoints have different gradient orientations and edge strengths.  Typically, in an image with a 100,000 pixels, we might find 1000 keypoints.  Keypoints are ideally suited for shape detection, since each keypoint is also usually an edge pixel.  
\item[Nerve Complexes]: Keypoint-based nerve complexes have a nucleus that is a keypoint. 
For nerve complexes, see 
Fig.~\ref{fig:girlNerves}.
\item[Maximal Nerves]: In the search for a principal image object shape, we start with $\maxNerv K$ in a nerve with the maximum number of filled triangles that have a nucleus vertex in common.  Experimentally, it has been found a large part of an image object shape will be covered by a maximal nerve (see, {\em e.g.},~\cite[\S 8.9,\S 9.1]{Peters2017ComputerVision}).
\item[Maximal Nerve Clusters]: Nerve complexes strongly near $\maxNerv K$ form a cluster useful in detecting an image object shape (denoted by $\maxNrvClu K$).\\
\item[Shape Contour]: The outer perimeter of a $\maxNrvClu K$ provides the contour of a shape that can be compared with other known shape contours, leading to the formation of shape classes.  A $\maxNrvClu K$ outer perimeter is called a shape edgelet~\cite[\S 7.6]{Peters2017ComputerVision}.  Shape contour comparisons can be accomplished by decomposing contour into nerve complexes and extracting geometric features of the nerves, {\em e.g.}, nerve centroids, area (sum of the areas of the nerve triangles), number of nerve triangles, maximal nerve triangle area, which are easily compared across triangulations of different images.
\end{description}
\qquad \textcolor{blue}{\Squaresteel}
\end{remark}

 \begin{figure}[!ht]
	\centering
	\subfigure[Vietri girl]{
	 \includegraphics[width=35mm]{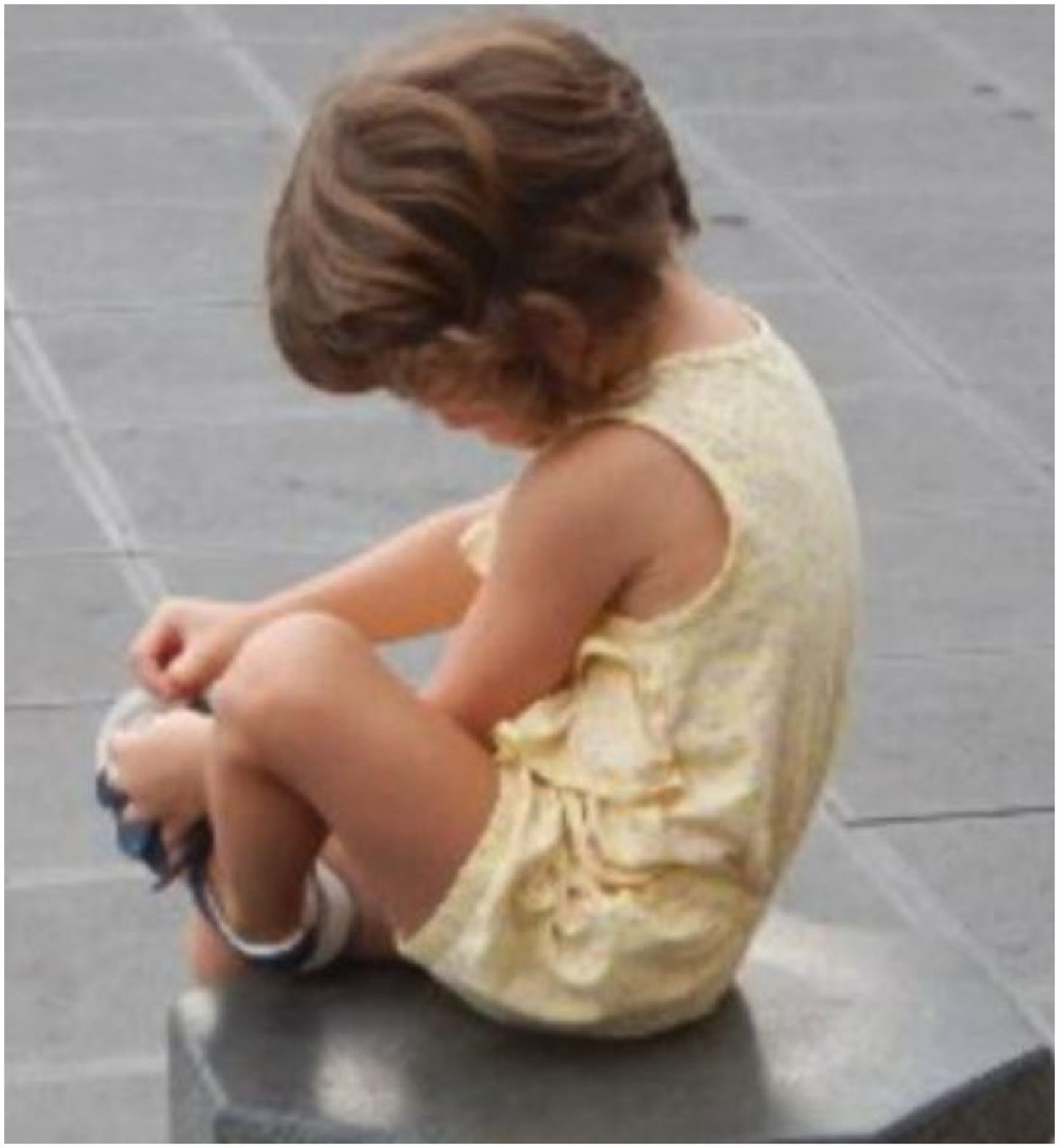}\label{fig:girl}}\hfil
	\subfigure[Girl vertices]{
	 \includegraphics[width=35mm]{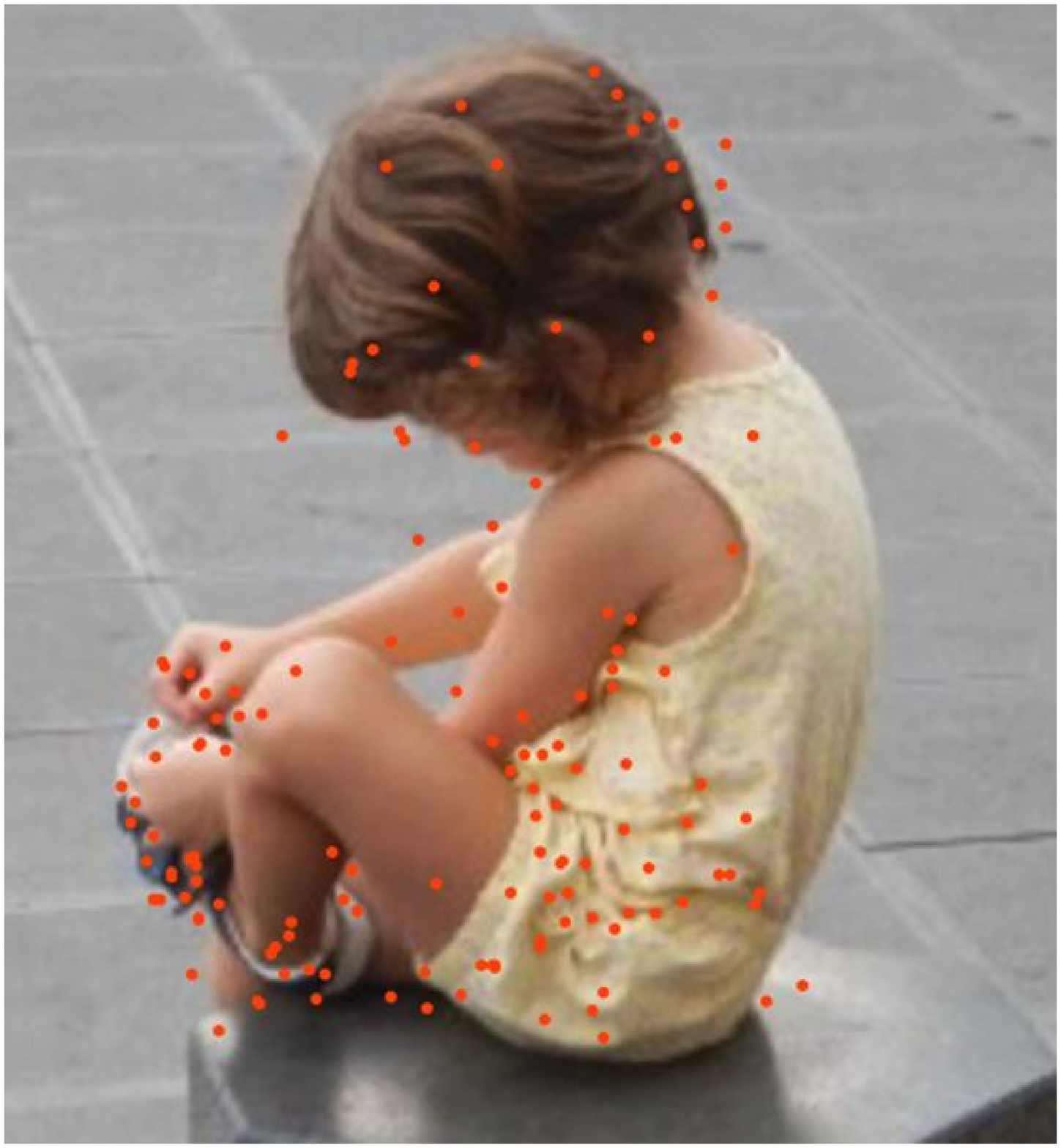}\label{fig:girlVertices}}
	\subfigure[Girl nerves]{
	 \includegraphics[width=35mm]{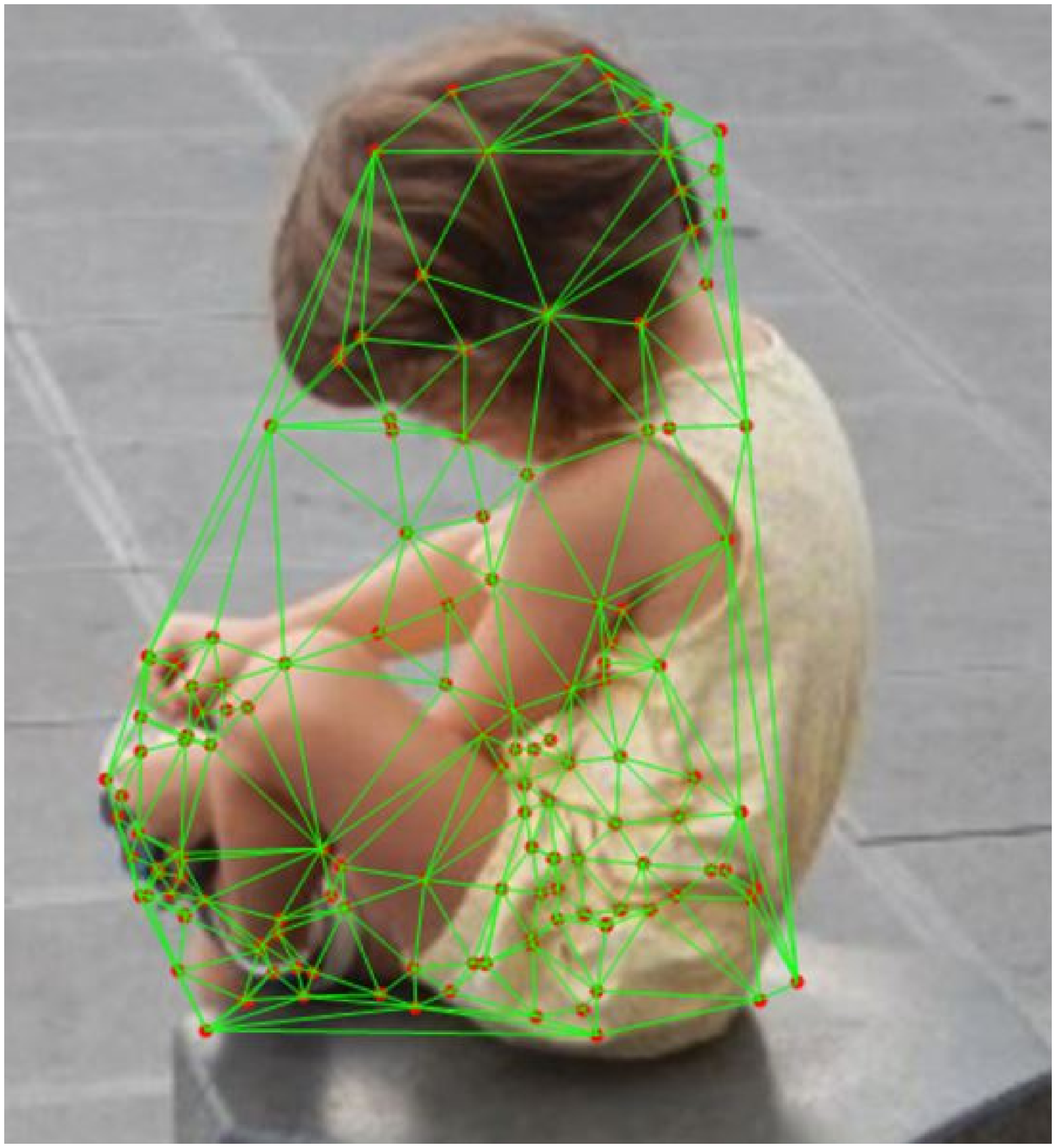}\label{fig:girlNerves}}
	\caption{Simplex vertices and overlapping Girl nerve simplicial complexes}
	\label{fig:girlGeometry}
\end{figure}

\begin{figure}[ht]
\centering
\includegraphics[width=90mm]{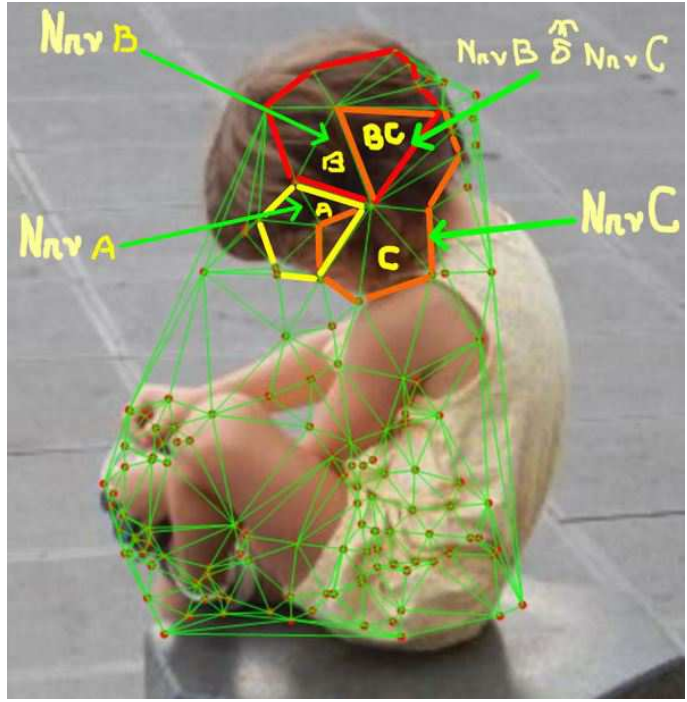}
\caption[]{Overlapping Nerve Complexes on Girl Image}
\label{fig:girlNrvs}
\end{figure}

\begin{example} {\bf Girl Closure Nerve Cluster Head Shape}.\\
Based on a selection of keypoints shown in Fig.~\ref{fig:girlVertices}, the triangulation of the girl image in Fig.~\ref{fig:girl} leads to a collection of filled triangle simplexes that cover the central region of the image as shown in Fig.~\ref{fig:girlNerves}.  From this triangulation, maximal nucleus clusters can be identified.  For example, we can begin to detect the shape of the head from the collection of overlapping nerve complexes in Fig.~\ref{fig:girlNrvs}.
The nerve complexes in Fig.~\ref{fig:girlNrvs} form a cluster with $\maxNerv A$ doing most of the work in highlighting the shape of a large part of the girl's head.  Let the upper region of this space be endowed with what is known as proximal relator~\cite{Peters2016FilomatProximalRelator}, which is a collection of proximity relations on the space.  Let $\left(X,\mathscr{R}_{\delta}\right)$ be a proximal relator space with the upper region of Fig.~\ref{fig:girl} represented by the set of points $X$ and let $\mathscr{R}_{\delta} = \left\{\sn,\dnear,\snd\right\}$.  Notice that the triangulation in Fig.~\ref{fig:girlNerves} contains a number of separated triangles.  Hence, from Theorem~\ref{lem:nerveFamily}, we can expect to find more than one nerve.  In this case, families of nerves can be found in this image, starting with the upper region of the image.  Then observe the following things in the collections of nerve complexes in Fig.~\ref{fig:girlNrvs}.
\begin{align*}
\Nrv A\ & \sn\ \Nrv B\ \mbox{\rm (Nerves with a common 2-spoke are strongly near (Theorem~\ref{lem:stronglyNearNerves}))},\\
\Nrv A\ & \sn\ \Nrv C\ \mbox{\rm (From Theorem~\ref{lem:stronglyNearNerves}, these nerves are strongly near)},\\
\Nrv B\ & \sn\ \Nrv C\ \mbox{\rm (From Theorem~\ref{lem:stronglyNearNerves}, these nerves are strongly near)},\\
\left(\Nrv A\ \cap\ \Nrv B\right)\ & \dnear\ \Nrv C\ \mbox{\rm (nerves with matching feature vectors, cf. Theorem~\ref{lem:descriptivelyNearNerves})},\\
\left(Nrv A\ \cap\ \Nrv C\right)\ & \dnear\ \Nrv B\ \mbox{\rm (nerves with matching feature vectors, cf. Theorem~\ref{lem:descriptivelyNearNerves})},\\
\left(\Nrv A\ \cap\ \Nrv C\right)\ & \snd\ \Nrv B\ \mbox{\rm (nerve interiors with matching descriptions, cf. Theorem~\ref{lem:stronglyDescriptivelyNearNerves})},\\
\left(\Nrv A\ \cap\ \Nrv B\right)\ & \snd\ \Nrv C\ \mbox{\rm (nerve interiors with matching descriptions, cf. Theorem~\ref{lem:stronglyDescriptivelyNearNerves})}.
\end{align*}
From these proximities, we can derive the head shape from the contour formed by the sequence of connected line segments along the outer edges of the nerve spokes.
\qquad \textcolor{blue}{\Squaresteel}
\end{example}

\section*{Concluding Remarks}
A theory of proximal nerve complexes is introduced in this paper.  An application of this theory is in the form of a framework for the detection of image object shapes.  For other promising places for applications of proximal nerve complexes, see, {\em e.g.},
\cite{Tozzi2017NsciLettersBarcodes},~\cite{Tozzi2017LambertMultidimensionalWorld} and~\cite[\S 5.3, \S 5.4, \S 5.10, \S 12.1 and \S 14.1]{Peters2016CP}. 
  
\bibliographystyle{amsplain}
\bibliography{NSrefs}

\end{document}